\newtheorem{theorem}{Theorem}[section]
\newtheorem{lemma}[theorem]{Lemma}
\newtheorem{corollary}[theorem]{Corollary}
\newtheorem{proposition}[theorem]{Proposition}
\newtheorem{openProblem}{Question}
\author{T\i naz Ekim\affiliationmark{1}
  \and Didem Gözüpek\affiliationmark{4}
  \and Ademir Hujdurovi\'c \affiliationmark{2,3}
  \and Martin Milani\v{c} \affiliationmark{2,3} \thanks{This work is supported in part by the Slovenian Research Agency (I0-0035, research program P1-0285 and research projects N1-0032, N1-0038, N1-0062, J1-7051, and J1-9110). T.~Ekim is supported by Turkish Academy of Sciences GEBIP award.} }
\title[On Almost Well-Covered Graphs of Girth at Least $6$]{On Almost Well-Covered Graphs of Girth at Least $6$}
\affiliation{
  Department of Industrial Engineering, Bo\u{g}azi\c{c}i University, Istanbul, Turkey\\
  University of Primorska, UP IAM, Muzejski trg 2, SI-6000 Koper, Slovenia\\
  University of Primorska, UP FAMNIT, Glagolja\v ska 8, SI-6000 Koper, Slovenia\\
  Department of Computer Engineering, Gebze Technical University, Kocaeli, Turkey}
\keywords{maximal independent set, almost well-covered graph, independence gap, girth.}
\begin{document}
\publicationdetails{20}{2018}{2}{17}{4514}
\maketitle
\begin{abstract}
  We consider a relaxation of the concept of well-covered graphs, which are graphs with all maximal independent sets of the same size.
The extent to which a graph fails to be well-covered can be measured by     its \emph{independence gap}, defined as the difference between the maximum and minimum sizes of a maximal independent set in $G$. While the well-covered graphs are exactly the graphs of independence gap zero, we investigate in this paper graphs of independence gap one, which we also call \emph{almost well-covered} graphs. Previous works due to Finbow et al.~and~Barbosa et al.~have implications for the structure of almost well-covered graphs of girth at least $k$ for $k\in \{7,8\}$.
We focus on almost well-covered graphs of girth at least~$6$. We show that every graph in this class has at most two vertices each of which is adjacent to exactly $2$ leaves. We give efficiently testable characterizations of almost well-covered graphs of girth at least~$6$ having exactly one or exactly two such vertices. Building on these results, we develop a polynomial-time recognition algorithm of almost well-covered $\{C_3,C_4,C_5,C_7\}$-free graphs.
\end{abstract}

\section{Introduction}

A graph is said to be \emph{well-covered} if all its (inclusion-)maximal independent sets have the same size. Well-covered graphs were introduced by Plummer in 1970~\cite{MR0289347} and have been studied extensively in the literature, see, e.g., the survey papers~\cite{MR1254158,MR1677797}. One of the main motivations for the study of well-covered graphs stems from the fact that the maximum independent set problem, which is generally {\sf NP}-complete, can be solved in polynomial time in the class of well-covered graphs by a greedy algorithm.
This point of view motivated Caro et al.~to study a more general concept, the so-called `greedy hereditary systems'~\cite{MR1368720}.

Here, we consider a different generalization. For a graph $G$ we denote by $\alpha(G)$ the maximum size of an independent set in $G$ and by $i(G)$ the minimum size of a maximal independent set in $G$. We say that the \emph{independence gap} of $G$ is the difference $\alpha(G)-i(G)$ and we denote it by $\mu_\alpha(G)$. An analogous parameter for maximal matchings was introduced and studied recently by Deniz et al.~\cite{MR3720285} under the name \emph{matching gap} of a graph, denoted by $\mu(G)$. Since matchings in a graph $G$ are exactly the independent sets in its line graph, the study of the independence gap can be seen as a generalization of the study of the matching gap.

Clearly, a graph $G$ is well-covered if and only if its independence gap is zero. In view of the extensive literature on the class of well-covered graphs, it is natural to try to generalize results on well-covered graphs to graphs of bounded independence gap. Again, identification of such graph classes is motivated by the maximum independent set problem:
In every class of graphs of uniformly bounded independence gap, the maximum independent set problem can be efficiently approximated up to an additive constant by the greedy algorithm.

We focus in this paper on the first case beyond the well-covered graphs. We say that a graph $G$ is \emph{almost well-covered} if it is of unit independence gap, that is, if $\mu_\alpha(G)= 1$.
It is in general difficult to determine if a given graph is almost well-covered. Indeed, the fact that recognizing well-covered graphs is {\sf co-NP}-complete (shown independently by Chv\'atal and Slater~\cite{MR1217991} and by Sankaranarayana and Stewart~\cite{MR1161178}) implies that recognizing almost well-covered graphs is {\sf NP}-hard. It suffices to observe that a graph $G$ is well-covered if and only if the disjoint union of $G$ with the three-vertex path is almost well-covered.

Given the intractability of the problem in general, it is of interest to characterize almost well-covered graphs under some additional restrictions.
In 1994, Finbow et al.~\cite{awc_girth8} denoted by $M_k$ the class of graphs that have maximal independent sets of exactly $k$ different sizes and characterized graphs in $M_2$ of girth at least $8$. Clearly, every almost well-covered graph is in $M_2$, and, in fact, one can use the result of Finbow et al.~to derive a characterization of almost well-covered graphs of girth at least 8, which also implies that this class of graphs can be recognized in polynomial time.
In 1998, Barbosa and Hartnell~\cite{MR1676478} characterized almost well-covered simplicial graphs and gave a sufficient condition for a chordal graph to be almost well-covered. (They denoted the class of almost well-covered graphs by $I_2$.) More recently, graphs in $M_k$ were studied further by Hartnell and Rall~\cite{MR3053598} and by
 Barbosa et al.~\cite{MR3056978}.
  A result due to Barbosa et al.~\cite[Theorem 2]{MR3056978} implies that for every $d$, there are only finitely many connected almost well-covered graphs of minimum degree at least $2$, maximum degree at most $d$, and girth at least~$7$.
  
  Let us also remark that, contrary to the fact that graphs of zero matching gap (known as equimatchable graphs) can be recognized in polynomial time~\cite{MR3128394,MR777180}, the computational complexity of recognizing graphs of unit matching gap is open. In terms of almost well-covered graphs, this means that the complexity of determining if a given line graph is almost well-covered is open.

\bigskip
\noindent{\bf Our results.}
The main goal of this paper is to further the study of almost well-covered graphs, with a focus on girth conditions. As noted above, results of Finbow et al.~\cite{awc_girth8} and of Barbosa et al.~\cite{MR3056978} can be used to infer results about almost well-covered graphs of girth at least $8$ or at least~$7$, respectively. We study almost well-covered graphs of girth at least~$6$. We obtain three results, which can be summarized as follows.

A vertex is said to be of \emph{type $2$} if it is adjacent to exactly $2$ leaves. We first show that every almost well-covered graph of girth at least~$6$ has at most two vertices of type $2$.
As our first main result, we give a complete structural characterization of
almost well-covered graphs of girth at least~$6$ having exactly two vertices of type 2 (Theorem~\ref{thm:2G2Girth6}).
Next, we characterize almost well-covered graphs of girth at least~$6$ having exactly one vertex of type 2 (Theorem~\ref{thm:1type2}).
For both cases, we develop a polynomial-time recognition algorithm for graphs in the respective class (Corollary~\ref{cor:polyrecognition2type2} and Theorem~\ref{thm:1type2Poly}). Finally, we use these results to develop a polynomial-time recognition algorithm of almost well-covered $C_7$-free graphs of girth at least $6$, or, equivalently, of almost well-covered $\{C_3,C_4,C_5,C_7\}$-free graphs (Theorem~\ref{thm:polyrecognition}).

\medskip
\noindent{\bf Structure of the paper.} In Section~\ref{sec:general}, we collect the main definitions and develop several technical lemmas about the independence gap of a graph for later use. Section~\ref{sec:no-short-cycles} is split into three subsections, in which our main results are derived.


\section{General results on independence gap}\label{sec:general}

For some of our theorems and proofs, it will be convenient to allow working with the null graph, the graph without vertices; clearly, if $G$ is the null graph, then $i(G) = \alpha(G) = 0$. Given two vertex sets $A$ and $B$ in a graph $G$, we say that $A$ \emph{dominates} $B$ if every vertex in $B$ has a neighbor in $A$. A \emph{clique} in a graph is a set of pairwise adjacent vertices, and a clique is said to be \emph{maximal} if it is not contained in any larger clique. As usual, we denote by $P_n$, $C_n$, and $K_n$ the path, the cycle, and the complete graph of order $n$, respectively.
Given a set of graphs ${\cal F}$ and a graph $G$, we say that $G$ is ${\cal F}$-free if no induced subgraph of $G$ is isomorphic to a member of ${\cal F}$; if ${\cal F} = \{F\}$, we also say that $G$ is $F$-free.
For graph theoretic terms not defined here, we refer to~\cite{MR1367739}.

The following straightforward lemma reduces the problem of determining the independence gap of a graph to its (connected) components.

\begin{lemma}\label{connected}
Let $G$ be a graph with components $H_1,\ldots, H_k$.
Then $\mu_\alpha(G) = \sum_{j = 1}^k\mu_\alpha(H_j)$.
\end{lemma}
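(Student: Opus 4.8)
The plan is to prove the additivity of the independence gap over connected components by establishing separately that both $\alpha$ and $i$ are additive over components, and then subtracting. The key observation is that any independent set $S$ in $G$ decomposes uniquely as a disjoint union $S = \bigcup_{j=1}^k S_j$ where $S_j = S \cap V(H_j)$, and conversely any choice of independent sets $S_j \subseteq V(H_j)$ yields an independent set $S = \bigcup_j S_j$ in $G$ (there are no edges between distinct components). Moreover, $S$ is a \emph{maximal} independent set of $G$ if and only if each $S_j$ is a maximal independent set of $H_j$: a vertex $v \in V(H_j)$ is dominated by $S$ (i.e., has a neighbor in $S$ or lies in $S$) precisely when it is dominated by $S_j$, since all of $v$'s neighbors lie in $H_j$.

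First I would record this correspondence formally, which immediately gives $\alpha(G) = \sum_{j=1}^k \alpha(H_j)$: to build a maximum independent set of $G$ we independently pick a maximum independent set in each component, and these choices are unconstrained across components. Next, using the characterization of maximal independent sets above, the same reasoning shows $i(G) = \sum_{j=1}^k i(H_j)$: a smallest maximal independent set of $G$ is obtained by taking, in each $H_j$, a smallest maximal independent set, and again the per-component choices are independent. Then
\[
\mu_\alpha(G) = \alpha(G) - i(G) = \sum_{j=1}^k \alpha(H_j) - \sum_{j=1}^k i(H_j) = \sum_{j=1}^k \bigl(\alpha(H_j) - i(H_j)\bigr) = \sum_{j=1}^k \mu_\alpha(H_j),
\]
as desired.

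There is essentially no substantial obstacle here; the only point requiring a small amount of care is the claim that $S$ is a maximal independent set of $G$ if and only if each restriction $S_j$ is a maximal independent set of $H_j$, and in particular arguing both directions of this equivalence (maximality of $S$ forces maximality of each $S_j$ because $H_j$ is a union of components, hence an induced subgraph with no edges leaving it; conversely, maximality of all the $S_j$ forces maximality of $S$ since any vertex outside $S$ lies in some $H_j$ and already has a neighbor in $S_j \subseteq S$). The induction on $k$ is trivial once the two-component case is handled, or one can argue directly for general $k$ as above. I would present the direct argument, since it avoids an induction and makes the decomposition transparent.
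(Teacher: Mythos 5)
Your proof is correct and follows the same route as the paper, which simply cites the additivity of $\alpha$ and $i$ over components and subtracts; you merely spell out the (easy) correspondence between maximal independent sets of $G$ and tuples of maximal independent sets of the $H_j$, which the paper leaves implicit.
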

\begin{proof}
Immediate from the facts that
$\alpha(G) = \sum_{j = 1}^k\alpha(H_j)$ and
$i(G) = \sum_{j = 1}^ki(H_j)$.
\end{proof}

\begin{sloppypar}
Lemma~\ref{connected} shows, in particular, that in any class of graphs closed under taking components and in which testing whether the graph is well-covered can be done efficiently, the problem of characterizing almost well-covered graphs reduces to the problem of characterizing almost well-covered connected graphs in the class.
\end{sloppypar}

\begin{corollary}\label{cor:components}
A graph is almost well-covered if and only if all its components are well-covered, except one, which is almost well-covered.
\end{corollary}

Since every complete graph $G$ has $\mu_\alpha(G)= 0$, Lemma~\ref{connected} also has the following consequence.

\begin{corollary}\label{cor:clique-components}
Let $G$ be a graph and let $U$ be the set of vertices $u\in V(G)$ such that the component of $G$ containing $u$ is complete.
Then, $\mu_\alpha(G) = \mu_\alpha(G-U)$.
\end{corollary}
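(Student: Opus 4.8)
The plan is to reduce everything to Lemma~\ref{connected}. Let $H_1,\ldots,H_k$ be the components of $G$, ordered so that $H_1,\ldots,H_m$ are complete and $H_{m+1},\ldots,H_k$ are not (with the conventions that $m=0$ if no component is complete and $m=k$ if all of them are). By the definition of $U$, a vertex lies in $U$ precisely when its component is one of $H_1,\ldots,H_m$, so $U = V(H_1)\cup\cdots\cup V(H_m)$, and consequently $G-U$ is exactly the disjoint union of $H_{m+1},\ldots,H_k$ (which is the null graph when $m=k$).

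Next I would apply Lemma~\ref{connected} twice: once to $G$, giving $\mu_\alpha(G)=\sum_{j=1}^k\mu_\alpha(H_j)$, and once to $G-U$, whose components are $H_{m+1},\ldots,H_k$, giving $\mu_\alpha(G-U)=\sum_{j=m+1}^k\mu_\alpha(H_j)$ (an empty sum, hence $0$, when $m=k$, in agreement with $i=\alpha=0$ for the null graph). Since each $H_j$ with $1\le j\le m$ is complete, the fact recorded just before the statement — every complete graph has independence gap zero, because each of its maximal independent sets is a single vertex, so $\alpha(H_j)=i(H_j)=1$ — yields $\mu_\alpha(H_j)=0$ for all such $j$. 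Subtracting the two displayed identities then gives $\mu_\alpha(G)-\mu_\alpha(G-U)=\sum_{j=1}^m\mu_\alpha(H_j)=0$, which is the claim.

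There is essentially no obstacle here; the statement is an immediate corollary of Lemma~\ref{connected} and the triviality of complete graphs. The only points that call for a moment's care are the two degenerate cases: when $U=\emptyset$ the asserted equality is simply $\mu_\alpha(G)=\mu_\alpha(G)$, and when $U=V(G)$ one must allow $G-U$ to be the null graph, for which $\mu_\alpha=0=\mu_\alpha(G)$; both are accommodated by letting the index $m$ range over $\{0,1,\ldots,k\}$ as above.
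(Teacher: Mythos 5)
Your proof is correct and follows exactly the route the paper intends: it deduces the corollary from Lemma~\ref{connected} together with the observation that every complete graph has independence gap zero, with the degenerate cases ($U=\emptyset$ or $U=V(G)$, the latter giving the null graph) handled consistently with the paper's conventions. No issues.
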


The following fact is often used in our proofs. Although it is implied by~Lemma 2.1~in~\cite{awc_girth8}, we give a short proof for the sake of completeness.

\begin{lemma}\label{tool}
For every independent set $I$ in $G$ we have $\mu_\alpha(G-N[I])\leq \mu_\alpha(G)$.
\end{lemma}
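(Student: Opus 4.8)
The plan is to relate maximal independent sets of $G - N[I]$ to maximal independent sets of $G$ by extending them through $I$. Let $H = G - N[I]$. The key observation is that for any independent set $J$ in $H$, the set $I \cup J$ is independent in $G$: indeed $I$ is independent, $J$ is independent, and no vertex of $J$ lies in $N[I]$ so there are no edges between $I$ and $J$. Moreover, if $J$ is a \emph{maximal} independent set in $H$, then $I \cup J$ is a maximal independent set in $G$. To see this, suppose some vertex $v \in V(G) \setminus (I \cup J)$ could be added. Since $I$ is already maximal "towards" $v$ would require $v \notin N[I]$, hence $v \in V(H)$; but then $v$ together with $J$ would contradict maximality of $J$ in $H$. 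So the map $J \mapsto I \cup J$ sends maximal independent sets of $H$ to maximal independent sets of $G$, and it is injective and increases size by exactly $|I|$ (a constant).

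From this, I would deduce the two inequalities that give the result. First, if $J$ is a maximum-size maximal independent set of $H$, i.e. $|J| = \alpha(H)$, then $I \cup J$ is a maximal independent set of $G$ of size $\alpha(H) + |I|$, so $\alpha(G) \geq \alpha(H) + |I|$. Second, if $J$ is a minimum-size maximal independent set of $H$, i.e. $|J| = i(H)$, then $I \cup J$ is a maximal independent set of $G$ of size $i(H) + |I|$, so $i(G) \leq i(H) + |I|$. Subtracting, $\mu_\alpha(G) = \alpha(G) - i(G) \geq (\alpha(H) + |I|) - (i(H) + |I|) = \alpha(H) - i(H) = \mu_\alpha(H)$, which is exactly $\mu_\alpha(G - N[I]) \leq \mu_\alpha(G)$.

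One edge case to handle: if $H$ is the null graph (which happens precisely when $I$ dominates $V(G)$, e.g. when $I$ is itself a maximal independent set of $G$), then $\mu_\alpha(H) = 0$ by the convention stated before Lemma~\ref{connected}, and the inequality holds trivially since $\mu_\alpha(G) \geq 0$ always. It is worth noting this case because $i(H)$ and $\alpha(H)$ are then $0$ and the argument above still goes through formally.

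I do not expect any real obstacle here; the only thing requiring a moment's care is the verification that $I \cup J$ is maximal in $G$ (not merely independent), which hinges on the fact that deleting $N[I]$ removes exactly those vertices that $I$ already "blocks." The rest is a direct comparison of the extreme sizes.
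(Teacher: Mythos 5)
Your proof is correct and rests on the same key idea as the paper's: a maximal independent set $J$ of $G-N[I]$ extends to the maximal independent set $I\cup J$ of $G$, shifting sizes by the constant $|I|$. The only differences are cosmetic — you argue directly via $\alpha(G)\geq\alpha(H)+|I|$ and $i(G)\leq i(H)+|I|$ while the paper phrases it as a contradiction with two maximal sets of $G-N[I]$ whose sizes differ too much, and you spell out the maximality check that the paper labels ``clearly.''
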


\begin{proof}
Let $k = \mu_\alpha(G)$ and assume for a contradiction that $G-N[I]$ has independence gap at least $k+1$. Then it contains two maximal independent sets $I_1$ and $I_2$ of sizes differing by at least $k+1$. Clearly, the two independent sets obtained by adding $I$ to each one of $I_1$ and $I_2$ are maximal in $G$ and have sizes differing by at least $k+1$, contradicting with  $\mu_\alpha(G)= k$.
\end{proof}

We now assign a non-negative integer, called the \emph{type} of $v$, to some vertices $v$ of $G$, see Fig.~\ref{fig:types-example} for an illustration.
Let $U$ denote the set of all vertices $u\in V(G)$ such that the component of $G$ containing $u$ is complete. Vertices of $G-U$ are classified according to their type as follows.
The vertex set of $G-U$ is split into the leaves (vertices of degree~$1$) and non-leaves, which we call \emph{internal vertices}.
Internal vertices adjacent to exactly $k$ leaves will be of \textit{type $k$}. Note that a leaf in a component other than $K_2$
is not assigned any type and there are two kinds of type $0$ vertices: vertices contained in a complete component and internal vertices
adjacent to no leaf. Consequently, any vertex in $U$ is of type $0$. We further denote by $G_i$ the subgraph of $G$ induced by all vertices of type $i$.
For an internal vertex $v\in V(G)$, a \emph{leaf of $v$} is a leaf adjacent to $v$.

\medskip

\begin{figure}[htpb]
\begin{center}
\includegraphics[width=0.6\linewidth]{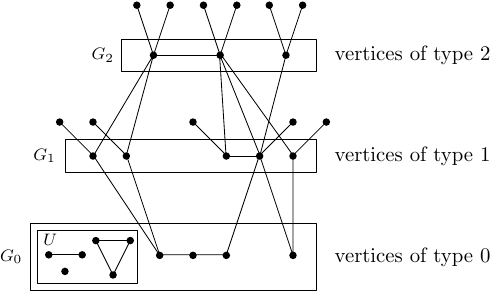}
\caption{Illustration of the types of vertices in a graph.}\label{fig:types-example}
\end{center}
\end{figure}

The next lemma shows that the number of leaves a vertex can have is bounded above by the independence gap plus one.

\begin{lemma}\label{lem:leafcondition}
Let $G$ be a graph with $\mu_\alpha(G)\leq k$. Then every internal vertex of $G$ is adjacent to at most $k+1$ leaves.
\end{lemma}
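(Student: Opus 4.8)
The plan is to prove the contrapositive flavored statement directly: assume $v$ is an internal vertex of $G$ adjacent to $\ell$ leaves $u_1,\dots,u_\ell$, and exhibit two maximal independent sets of $G$ whose sizes differ by $\ell-1$, which forces $\mu_\alpha(G)\ge \ell-1$, hence $\ell\le k+1$. The first of these sets is obtained by putting all $\ell$ leaves $u_1,\dots,u_\ell$ into an independent set and extending greedily to a maximal independent set $I_1$ of $G$; note $v\notin I_1$ since $v$ is adjacent to $u_1$. The second is obtained by putting $v$ (and none of its leaves) into an independent set and extending greedily to a maximal independent set $I_2$ of $G$; note none of $u_1,\dots,u_\ell$ lies in $I_2$ since each is adjacent to $v$.

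The key step is to compare $|I_1|$ and $|I_2|$ by relating both to the graph $H = G - N[v]$, which is obtained from $G$ by deleting $v$ together with all its neighbors (in particular all its leaves $u_i$). The point is that, apart from $v$ and its leaves, the vertices of $I_1$ and $I_2$ live among the vertices of $H$. More precisely: since $u_1,\dots,u_\ell$ are leaves adjacent to $v$, once they are all in the independent set the only vertex they could block from being added is $v$ itself; so $I_1 \setminus \{u_1,\dots,u_\ell\}$ is a maximal independent set of $H$, of size $|I_1| - \ell$. On the other side, $I_2 \setminus \{v\}$ is an independent set in $G - N[v] = H$ which is maximal in $H$ (any vertex of $H$ not dominated by it could be added to $I_2$, contradicting maximality of $I_2$ in $G$), so $|I_2| - 1 \le \alpha(H)$, and in fact it suffices to pick $I_2$ so that $I_2 \setminus \{v\}$ is a minimum maximal independent set of $H$, giving $|I_2| - 1 = i(H)$; symmetrically we may choose $I_1$ so that $|I_1| - \ell = \alpha(H)$. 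Then
\[
|I_1| - |I_2| = \bigl(\alpha(H) + \ell\bigr) - \bigl(i(H) + 1\bigr) = \mu_\alpha(H) + \ell - 1 \ge \ell - 1.
\]
Since $I_1$ and $I_2$ are maximal independent sets of $G$, this yields $\mu_\alpha(G) \ge \ell - 1$, so $\ell - 1 \le k$, i.e.\ $\ell \le k+1$, as desired. (One could equally phrase this using Lemma~\ref{tool} with $I = \{v\}$, which gives $\mu_\alpha(H) \le \mu_\alpha(G)$, but the argument above does not even need that inequality.)

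I do not expect a serious obstacle here; the only point requiring a little care is the bookkeeping that $I_1 \setminus \{u_1,\dots,u_\ell\}$ really is \emph{maximal} in $H$ — this uses crucially that each $u_i$ has $v$ as its only neighbor, so that removing the $u_i$'s frees up no vertex of $H$ — and the symmetric observation that extending $\{v\}$ greedily and then deleting $v$ leaves a maximal independent set of $H$. It is also worth stating explicitly that the lemma concerns internal vertices of $G$, so $v$ is not itself a leaf and $H = G - N[v]$ is well-defined (possibly the null graph, in which case $\alpha(H) = i(H) = 0$ and the computation above still goes through, giving $\mu_\alpha(G) \ge \ell - 1$).
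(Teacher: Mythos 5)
Your construction does prove the lemma, but one intermediate claim is false as stated and needs rewording. The set $I_1$ obtained by greedily extending $\{u_1,\dots,u_\ell\}$ may well contain non-leaf neighbours of $v$: the leaves block only $v$ itself, so nothing prevents the extension from picking up vertices of $N(v)\setminus\{u_1,\dots,u_\ell\}$. Hence $I_1\setminus\{u_1,\dots,u_\ell\}$ need not even be a subset of $V(H)$, let alone a maximal independent set of $H$, and for the same reason the equality $|I_1|-\ell=\alpha(H)$ need not be attainable. Fortunately only one direction of that estimate is used: taking a maximum independent set $S$ of $H$, the set $S\cup\{u_1,\dots,u_\ell\}$ is independent in $G$ (each $u_i$ has $v$ as its only neighbour and $v\notin V(H)$), so any maximal extension $I_1$ of it satisfies $|I_1|\ge\alpha(H)+\ell$. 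Your treatment of $I_2$ is correct and exact: since $v\in I_2$, no neighbour of $v$ lies in $I_2$, so $I_2\setminus\{v\}\subseteq V(H)$, and $\{v\}\cup T$ with $T$ a minimum maximal independent set of $H$ is indeed maximal in $G$, giving $|I_2|=i(H)+1$. With these two bounds your displayed chain holds with an inequality in place of the first equality, and the conclusion $\mu_\alpha(G)\ge\ell-1$, hence $\ell\le k+1$, stands.

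For comparison, the paper carries out the same exchange idea with much lighter bookkeeping: take any maximal independent set $I$ of $G$ containing $v$, observe that $(I\setminus\{v\})\cup\{u_1,\dots,u_\ell\}$ is independent (again because each $u_i$ is adjacent only to $v$), and extend it to a maximal independent set $I'$; then $|I'|\ge |I|+\ell-1$ immediately, so $\mu_\alpha(G)\ge \ell-1$ without ever introducing $H$, $\alpha(H)$, or $i(H)$. Your route buys nothing extra here, and the detour through the extremal invariants of $H$ is precisely where the inaccurate maximality claim crept in; either repair the wording as above or simplify to the single-swap argument.
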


\begin{proof}
Assume for a contradiction that $x$ is an internal vertex with $p\ge k+2$ leaves $b_1,\ldots, b_{p}$. Then a maximal independent set $I$ containing $x$ and a maximal independent set $I'$ obtained by extending independent set $(I-\{x\}) \cup \{b_1, \ldots ,b_{p} \}$ to a maximal one have their sizes differing by at least $k+1$, a contradiction with $\mu_\alpha(G)\leq k$.
\end{proof}

For later use, we record the following immediate consequence of Lemma~\ref{lem:leafcondition}.

\begin{corollary}\label{cor:atmost2leaves}
In a graph $G$ with $\mu_\alpha(G)\leq 1$, every internal vertex is adjacent to at most two leaves.
\end{corollary}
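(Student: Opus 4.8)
The plan is to read this off from Lemma~\ref{lem:leafcondition} by specializing the parameter. Since $\mu_\alpha(G)\le 1$, the hypothesis of Lemma~\ref{lem:leafcondition} holds with $k=1$, and its conclusion states that every internal vertex of $G$ is adjacent to at most $k+1 = 2$ leaves, which is precisely the assertion. So the proof is a one-line appeal to the preceding lemma.

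If one prefers a self-contained argument, it suffices to repeat the proof of Lemma~\ref{lem:leafcondition} in this case: assume for contradiction that some internal vertex $x$ has three pairwise distinct leaves $b_1,b_2,b_3$. Pick any maximal independent set $I$ of $G$ containing $x$ (such an $I$ exists, e.g.\ by greedily extending $\{x\}$). The set $(I\setminus\{x\})\cup\{b_1,b_2,b_3\}$ is independent, because each $b_j$ has $x$ as its only neighbor and $x$ has been removed; extend it to a maximal independent set $I'$. Then $|I'|\ge |I|-1+3 = |I|+2$, so $\alpha(G)-i(G)\ge |I'|-|I|\ge 2$, contradicting $\mu_\alpha(G)\le 1$.

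There is no genuine obstacle here; the statement is a direct corollary. The only minor point worth noting is that we do not actually use that $x$ is internal beyond the fact that it is assigned a type in $G-U$ (equivalently, it is not a leaf and not in a complete component); the counting argument itself only relies on the leaves $b_j$ being pendant at $x$ and on the extendability of independent sets to maximal ones.
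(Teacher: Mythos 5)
Your proposal is correct and matches the paper exactly: the paper records this corollary as an immediate consequence of Lemma~\ref{lem:leafcondition} with $k=1$, which is precisely your one-line argument (your optional self-contained version simply re-runs the lemma's proof and is also fine).
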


The following lemma is similar to~\cite[Lemma 2.10]{awc_girth8}.

\begin{lemma} \label{lem:type 2 clique}
Let $G$ be an almost well-covered graph. Then $G_2$ is a complete graph.
\end{lemma}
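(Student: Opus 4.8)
The plan is to show that any two vertices $u, v \in V(G_2)$ must be adjacent. Suppose for contradiction that $u$ and $v$ are both of type $2$ but non-adjacent. Let $a_1, a_2$ be the two leaves of $u$ and $b_1, b_2$ be the two leaves of $v$; since $u$ and $v$ are non-adjacent and leaves have degree $1$, these four leaves are all distinct and form an independent set together with... well, we will work with them carefully. The idea is to build two maximal independent sets of $G$ whose sizes differ by at least $2$, contradicting $\mu_\alpha(G) = 1$.

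First I would start from a maximal independent set $I_0$ of the graph $G - N[\{u,v\}]$ (the graph obtained by deleting $u$, $v$, and all their neighbors, including the four leaves). Then $I_1 := I_0 \cup \{u, v\}$ is an independent set of $G$: indeed $u$ and $v$ are non-adjacent, neither has a neighbor in $I_0$ by construction, and $I_1$ is maximal in $G$ because every vertex of $N[\{u,v\}] \setminus \{u,v\}$ is dominated by $\{u,v\}$ and every other vertex is dominated by $I_0$. On the other hand, $I_2 := I_0 \cup \{a_1, a_2, b_1, b_2\}$ is also an independent set: the four leaves are pairwise non-adjacent (each has its unique neighbor among $\{u,v\}$), and none of them lies in or is adjacent to $I_0$ since they and their neighbors lie in $N[\{u,v\}]$. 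Extending $I_2$ to a maximal independent set $I_2'$ of $G$, we have $|I_2'| \ge |I_0| + 4$, while $|I_1| = |I_0| + 2$, so $|I_2'| - |I_1| \ge 2$, contradicting $\mu_\alpha(G) = 1$. Hence $G_2$ is a clique.

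Alternatively, and perhaps more cleanly, I can invoke Lemma~\ref{tool}: if $u, v$ are non-adjacent type-$2$ vertices, set $I = \{u, v\}$, which is independent; then $\mu_\alpha(G - N[I]) \le \mu_\alpha(G) = 1$. But now consider instead the independent set $J$ consisting of the four leaves $a_1, a_2, b_1, b_2$ — wait, $N[J] \supseteq N[I]$ is not quite what I want; the right move is to compare the graph $G - N[\{u,v\}]$ directly. Take any maximal independent set $M$ of $G - N[\{u,v\}]$; then $M \cup \{u,v\}$ and a maximal extension of $M \cup \{a_1,a_2,b_1,b_2\}$ are both maximal in $G$ with sizes differing by at least $2$.

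\textbf{Main obstacle.} The only subtlety is making sure the four leaves are genuinely distinct and that none of them, nor $u$, nor $v$, interferes with the chosen maximal independent set $M$ of $G - N[\{u,v\}]$ — but this is immediate from $u \not\sim v$ (so $u \notin N[v]$, forcing all six vertices $u, v, a_1, a_2, b_1, b_2$ to lie in $N[\{u,v\}]$ and hence outside $V(G - N[\{u,v\}])$) and from the fact that a leaf of $u$ cannot also be a leaf of $v$ (it has degree $1$). No girth hypothesis is needed here; the lemma holds for all almost well-covered graphs, exactly paralleling \cite[Lemma 2.10]{awc_girth8}.
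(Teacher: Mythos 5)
Your proof is correct and follows essentially the same route as the paper: given two non-adjacent type-$2$ vertices, build a maximal independent set containing both and swap them for their four leaves to obtain a maximal independent set larger by at least $2$, contradicting $\mu_\alpha(G)=1$. Your explicit construction via a maximal independent set of $G-N[\{u,v\}]$ and the final extension step even patches the maximality point that the paper's one-line proof glosses over, but the underlying idea is identical.
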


\begin{proof}
Let $G$ be an almost well-covered graph and $x$ and $y$ be two vertices of Type 2 which are non-adjacent. Let $I$ be a maximal independent set of $G$ containing both $x$ and $y$. Then $(I-\{x,y\} \cup \{b_1,b_2,b_3,b_4\})$  where $b_i, i=1, \ldots ,4$ are the leaves of $x$ and $y$, is an  independent set of $G$ with two more vertices than $I$, contradicting with $G$ having independence gap~$1$.
\end{proof}

The following technical claim will also be used later.

\begin{lemma}\label{lem:Hdomination}
Let $G$ be a graph, $H$ a non-null subgraph of $G$, and let $n_0$ denote the maximum number of vertices in a component of $H$. For every independent set $I\subseteq V(G)\setminus V(H)$ and every $k\leq \alpha(H)-i(H-N(I))$  there exists an independent set $I_1\subseteq I$ of size $|I_1|\leq kn_0$ such that $k\leq \alpha(H)-i(H-N(I_1))$.
\end{lemma}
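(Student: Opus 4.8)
The plan is to work componentwise on $H$. Write $H_1,\dots,H_m$ for the components of $H$ and, for each $j$, set $d_j=\alpha(H_j)-i(H_j-N(I))$; since $i(H_j-N(I))\le\alpha(H_j-N(I))\le\alpha(H_j)$, these are nonnegative integers, and by additivity of $\alpha$ and $i$ over disjoint unions (cf.\ the proof of Lemma~\ref{connected}) their sum equals $\alpha(H)-i(H-N(I))\ge k$. Assuming $k\ge1$ (for $k\le0$ one may take $I_1=\emptyset$), I would first select greedily a set $S$ of indices $j$ with $d_j\ge1$ whose values $d_j$ sum to at least $k$; stopping as soon as the running total reaches $k$ and using that each selected term is at least $1$ gives $|S|\le k$. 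The role of $S$ is that, in order to build $I_1$ with $\alpha(H)-i(H-N(I_1))\ge k$, it is enough to ``protect'' the components indexed by $S$ while on the remaining components one only invokes the trivial inequality $i(H_j-N(I_1))\le\alpha(H_j)$.

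For each $j\in S$ I would let $J_j$ be a maximal independent set of $H_j-N(I)$ of minimum size, so $|J_j|=i(H_j-N(I))$, and put $W_j=V(H_j)\setminus N[J_j]$. Maximality of $J_j$ in $H_j-N(I)$ forces every vertex of $V(H_j)\setminus N(I)$ outside $J_j$ to have a neighbor in $J_j$, i.e.\ $W_j\subseteq N(I)$; hence each $w\in W_j$ has some neighbor $v_w\in I$, and choosing one such neighbor per vertex of $W_j$ yields $I_1^{(j)}=\{v_w: w\in W_j\}\subseteq I$ with $|I_1^{(j)}|\le|W_j|\le|V(H_j)|\le n_0$. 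Setting $I_1:=\bigcup_{j\in S}I_1^{(j)}$, we obtain an independent subset of $I\subseteq V(G)\setminus V(H)$ with $|I_1|\le|S|\,n_0\le k\,n_0$, which is the required size bound.

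The remaining step is to verify that $J_j$ stays a maximal independent set of $H_j-N(I_1)$ for every $j\in S$. Independence is preserved because $I_1\subseteq I$ gives $N(I_1)\subseteq N(I)$ and hence $J_j\cap N(I_1)\subseteq J_j\cap N(I)=\emptyset$; maximality is preserved because $N(I_1)\supseteq N(I_1^{(j)})\supseteq W_j$, so $V(H_j)\setminus N(I_1)\subseteq V(H_j)\setminus W_j\subseteq N[J_j]$, meaning every vertex of $H_j-N(I_1)$ outside $J_j$ still has a neighbor in $J_j$. This yields $i(H_j-N(I_1))\le|J_j|=\alpha(H_j)-d_j$ for $j\in S$; combining with $i(H_j-N(I_1))\le\alpha(H_j)$ for $j\notin S$ and summing over all components gives $i(H-N(I_1))\le\alpha(H)-\sum_{j\in S}d_j\le\alpha(H)-k$, which is exactly $k\le\alpha(H)-i(H-N(I_1))$.

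The point needing the most care is the interaction between components: the correction sets $I_1^{(j)}$, $j\in S$, are merged into a single set $I_1$, and a vertex of $I_1^{(j)}$ may well have neighbors inside another component $H_{j'}\in S$, so that forming $N(I_1)$ deletes from $H_{j'}$ more vertices than $N(I_1^{(j')})$ alone. The reason this does no damage is precisely the inclusion $N(I_1)\subseteq N(I)$ used above: the set $J_{j'}$ sits outside $N(I)$ and so is never deleted, while the vertices that genuinely had to be covered (those of $W_{j'}$) are already covered by $I_1^{(j')}$, and any further deletions only shrink the graph in which $J_{j'}$ must be shown maximal. The only other detail is the counting behind $|S|\le k$, which is immediate since the greedy selection is stopped the moment the accumulated gap reaches $k$.
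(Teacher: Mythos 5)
Your proof is correct and follows essentially the same route as the paper's: decompose $H$ into components, select at most $k$ components whose gaps $\alpha(H_j)-i(H_j-N(I))$ sum to at least $k$, and then choose at most $n_0$ vertices of $I$ per selected component, giving $|I_1|\le kn_0$. The only difference is the final covering step: the paper takes a minimal subfamily of $\{N(x)\cap A \mid x\in I\}$ covering all of $N(I)\cap V(H_j)$ on the chosen components, so that $H_j-N(I_1)=H_j-N(I)$ there, whereas you only dominate the vertices $W_j$ left undominated by a fixed minimum maximal independent set $J_j$ of $H_j-N(I)$, which keeps $J_j$ maximal in $H_j-N(I_1)$ and yields $i(H_j-N(I_1))\le i(H_j-N(I))$ --- an equally valid (and slightly leaner) finish.
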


\begin{proof}
Let $I$ be an independent set such that $I\subseteq V(G)\setminus V(H)$ and let $k$ be a non-negative integer such that $k\leq \alpha(H)-i(H-N(I))$.
Denote the components of $H$ by $H_1,\ldots,H_m$. It is clear that every maximal independent set in $H$ is a union of maximal independent sets in $H_j$; therefore, $\alpha(H)=\sum_{j=1}^m\alpha(H_j)$ and $i(H-N(I))=\sum_{j=1}^m i(H_j-N(I))$. Clearly, there exists a set $J\subseteq \{1,\ldots,m\}$ with $|J|\leq k$ such that $\sum_{j\in J}\left(\alpha(H_j)-i(H_j-N(I))\right)\geq k$. We may assume without loss of generality that $J= \{1,\ldots,k\}$.

For $j\in \{1,\ldots,k\}$ let $A_j=N(I)\cap V(H_j)$.
Let $A=\cup_{j=1}^{k}A_j$. It is clear that $|A|\leq kn_0$. Let $\mathcal{A}=\{N(x)\cap A\mid x \in I\}$.
Then $\mathcal{A}$ is a covering family of $A$, that is, $A$ is the union of the sets in $\mathcal{A}$.
Let $\mathcal{A}_1\subseteq \mathcal{A}$ be any inclusion-minimal subfamily covering $A$.
By the minimality of $\mathcal{A}_1$, every set in $\mathcal{A}_1$ covers at least one vertex of $A$ that is not covered by other sets of $\mathcal{A}_1$ and therefore $|\mathcal{A}_1|\leq |A|$. Furthermore, by definition of $\mathcal{A}_1$, each set $S\in \mathcal{A}_1$ can be identified by an element $x\in I$ such that $N(x)\cap A=S$. This defines a subset $I_1$ of $I$ such that $|I_1|\leq |\mathcal{A}_1|$ and $N(I_1)\cap V(H_j)=A_j$, for every $j\in \{1,\ldots,k\}$. It is now clear that $H_j-N(I)=H_j-N(I_1)$, for every $j\in \{1,\ldots,k\}$.
Hence
$$\alpha(H)-i(H-N(I_1))\geq \sum_{j=1}^{k}\left(\alpha(H_j)-i(H_j-N(I_1))\right)=\sum_{j=1}^{k}\left(\alpha(H_j)-i(H_j-N(I))\right)\geq k.$$
Since $|I_1|\le |\mathcal{A}_1|\le |A|\le kn_0$, this concludes the proof.
\end{proof}

Next, we develop an important tool characterizing graphs with independence gap at most $k$ among graphs with internal vertices of types 0 and 1 only. Note that if $G$ is a graph with internal vertices of types 0 and 1 only and $G_0$ is the null graph, then $G$ is well-covered.

\begin{theorem}\label{thm:gapk_characterization}
Let $G$ be a graph with internal vertices of types 0 and 1 only such that $G_0$ is non-null, and let $n_0$ denote the maximum size of a component of $G_0$. Then $\mu_\alpha(G)\leq k$ if and only if $\mu_\alpha(G_0)\leq k$ and for every independent set $I$ in $G_1$ of size at most $(k+1)n_0$, it holds that
$$
\alpha(G_0)-i(G_0-N(I))\leq k.
$$
\end{theorem}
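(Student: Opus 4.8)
The idea is to reduce the statement to an explicit formula for $\mu_\alpha(G)$ in terms of $G_0$ and $G_1$, and then feed that formula into Lemma~\ref{lem:Hdomination}. As a preliminary step I would remove the complete components: by Corollary~\ref{cor:clique-components} we have $\mu_\alpha(G)=\mu_\alpha(G-U)$, and since the complete components of $G$ are exactly the complete components of $G_0$ and each of them contributes the same amount to $\alpha(G_0)$ and to $i(G_0-N(I))$ for every $I\subseteq V(G_1)$, deleting $U$ changes neither $\mu_\alpha(G)$ nor the quantities $\alpha(G_0)-i(G_0-N(I))$. So I may assume $G$ has no complete component. Then $V(G)$ is the disjoint union of $V(G_0)$, $V(G_1)$ and the set $L$ of leaves; moreover, since every internal vertex has type $0$ or $1$, a type-$0$ vertex has no leaf, and a type-$1$ vertex has exactly one leaf, the map $v\mapsto\ell_v$ sending a type-$1$ vertex to its leaf is a bijection $V(G_1)\to L$, the set $L$ is independent, and $N(L)=V(G_1)$; in particular no leaf has a neighbour in $G_0$.

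The core of the proof is the identity
\[
\mu_\alpha(G)=\max\bigl\{\,\alpha(G_0)-i\bigl(G_0-N(I)\bigr)\ :\ I\text{ an independent set in }G_1\,\bigr\}.
\]
To establish it, I would first observe that in any maximal independent set $J$ of $G$ and any $v\in V(G_1)$ exactly one of $v,\ell_v$ lies in $J$ (if $v\notin J$, then $\ell_v$, whose only neighbour is $v$, must lie in $J$). Setting $I:=J\cap V(G_1)$ and $J_0:=J\cap V(G_0)$, it follows that $|J|=|V(G_1)|+|J_0|$, that $I$ is an arbitrary independent set of $G_1$, and — using that the leaves in $J$ have no neighbour in $G_0$ — that $J_0$ is a maximal independent set of $G_0-N(I)$. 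Conversely, any independent set $I$ of $G_1$ together with any maximal independent set $J_0$ of $G_0-N(I)$ gives the maximal independent set $J_0\cup I\cup\{\ell_v:v\in V(G_1)\setminus I\}$ of $G$, of size $|V(G_1)|+|J_0|$. Taking the maximum and the minimum of $|J|$ over all maximal independent sets of $G$ yields $\alpha(G)=|V(G_1)|+\alpha(G_0)$ (attained at $I=\emptyset$) and $i(G)=|V(G_1)|+\min_I i(G_0-N(I))$, and subtracting gives the identity.

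With the identity in hand, the two directions are short. If $\mu_\alpha(G)\le k$, then $\alpha(G_0)-i(G_0-N(I))\le k$ for every independent set $I$ of $G_1$; the case $I=\emptyset$ gives $\mu_\alpha(G_0)\le k$, and restricting to $|I|\le(k+1)n_0$ gives the second condition. For the converse, assume both conditions hold and, for contradiction, that $\mu_\alpha(G)\ge k+1$. By the identity there is an independent set $I$ in $G_1$ with $\alpha(G_0)-i(G_0-N(I))\ge k+1$. Since $G_0$ is a non-null subgraph of $G$ all of whose components have at most $n_0$ vertices and $I\subseteq V(G)\setminus V(G_0)$, Lemma~\ref{lem:Hdomination} applied with $H=G_0$ and the integer $k+1$ produces an independent set $I_1\subseteq I$ with $|I_1|\le(k+1)n_0$ and $\alpha(G_0)-i(G_0-N(I_1))\ge k+1$, contradicting the second condition since $I_1$ is an independent set of $G_1$ of size at most $(k+1)n_0$.

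The step I expect to be the main obstacle is pinning down the structure of the maximal independent sets of $G$ in the second paragraph: one must verify carefully that the leaf–vertex bijection forces $|J|=|V(G_1)|+|J\cap V(G_0)|$ for every maximal independent set $J$, and that the edges which may run between $V(G_0)$ and $V(G_1)$ (but never between $V(G_0)$ and $L$) are handled correctly, so that $J\cap V(G_0)$ is precisely a maximal independent set of $G_0-N(J\cap V(G_1))$. Once the displayed identity is available, the remaining work is bookkeeping together with a direct application of Lemma~\ref{lem:Hdomination}.
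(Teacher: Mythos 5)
Your proof is correct and follows essentially the same route as the paper: the same decomposition of a maximal independent set of $G$ into an independent set $I$ of $G_1$, a maximal independent set of $G_0-N(I)$, and the complementary leaves (using that every maximal independent set meets $V(G_1)\cup L$ in exactly $|V(G_1)|$ vertices), followed by the same application of Lemma~\ref{lem:Hdomination} to shrink $I$ to size at most $(k+1)n_0$; you merely package the first step as the explicit identity $\mu_\alpha(G)=\max_I\bigl(\alpha(G_0)-i(G_0-N(I))\bigr)$, whereas the paper argues each direction by contradiction. The only slip is the harmless claim that the complete components of $G$ are exactly the complete components of $G_0$ (a complete component of $G_0$ need not be a component of $G$); what you actually use --- that the complete components of $G$ are components of $G_0$, are disjoint from $N(I)$ for every independent $I\subseteq V(G_1)$, and hence contribute equally to $\alpha(G_0)$ and $i(G_0-N(I))$ --- is true.
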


\begin{proof}
Let $U$ be the set of all vertices $u\in V(G)$ such that the component of $G$ containing $u$ is complete.

\medskip
Suppose that $\mu_\alpha(G)\leq k$.
Let $G' = G-U$. By Corollary~\ref{cor:clique-components}, $\mu_\alpha(G') \le k$.
By Lemma \ref{tool} we also have  $\mu_\alpha(G_0)\leq k$ since $G_0=G-N[L]$, where $L$ is the set of all leaves of $G-U$, thus an independent set. Now assume for a contradiction that there is an independent set $I$ of $G_1$ of size at most $(k+1)n_0$
such that $\alpha(G_0)-i(G_0-N(I))\geq k+1$. Now let $I_0$ be a maximal independent set of $G_0-N(I)$ of size $i(G_0-N(I))$
and $I_0^*$ be a maximum independent set of $G_0$.
Consider now the two independent sets $I_1=I_0^*\cup L$ and $I_2=I_0\cup I \cup (L\setminus N(I))$.
Maximality of both $I_1$ and $I_2$ in $G$ follow from their definitions.
We have $|I_1|=\alpha(G_0)+|L|$ and $|I_2|=i(G_0-N(I))+|L|$, yielding $|I_1|-|I_2|=\alpha(G_0)-i(G_0-N(I))\geq k+1$, contradiction with  $\mu_\alpha(G)\leq k$.

\medskip
Suppose now that $\mu_\alpha(G_0)\leq k$ and for every independent set $I$ in $G_1$ of size at most $n_0(k+1)$ we have $\alpha(G_0)-i(G_0-N(I))\leq k$.
Suppose for a contradiction that $\mu_\alpha(G)\geq k+1$, that is, there are maximal independent sets $I_1$ and $I_2$ of $G$ such that $|I_1|-|I_2|\geq k+1$. We define $I_1^0=I_1\cap V(G_0)$, $I_2^0=I_2\cap V(G_0)$ and $I_2^1=I_2\cap V(G_1)$. Since $G$ has only vertices of type 0 and 1, we have that $V(G)$ is the disjoint union of $V(G_0)$, $V(G_1)$, and $L$, where $L$ is the set of leaves of $G-U$. It is clear that both $I_1$ and $I_2$ have exactly $|V(G_1)|$ vertices in common with $V(G_1) \cup L$. This implies that $|I_1^0|-|I_2^0|\geq k+1$. Now we have $\alpha(G_0)\geq |I_1^0|\geq |I_2^0|+k+1 \geq i(G_0-N(I_2^1))+k+1$, where the last inequality holds because $I_2^0$ is a maximal independent set of $G_0-N(I_2^1)$. By Lemma~\ref{lem:Hdomination} with $G-L$, $k+1$, $G_0$, and $I_2^1$ in place of $G$, $k$, $H$, and $I$, respectively, it follows that
there exists an independent set $I'\subseteq I_2^1$ in $G_1$ of size at most $n_0(k+1)$ such that $\alpha(G_0) - i(G_0-N(I'))\ge k+1$.
This is a contradiction to our assumption that $\alpha(G_0)-i(G_0-N(I))\leq k$ for every independent set $I$ in $G_1$.
\end{proof}

The implication of Theorem \ref{thm:gapk_characterization} to graphs with independence gap $0$ or at most $1$ is worth mentioning separately.
\begin{corollary}\label{cor:type1main}
Let $G$ be a graph with internal vertices of types 0 and 1 only, and
let $n_0$ denote the maximum size of a component of $G_0$
(with $n_0 = 0$ if $G_0$ is null). Then
\begin{enumerate}
\item $G$ is well-covered if and only if $G_0$ is well-covered and  for every independent set $I$ in $G_1$ of size at most $n_0$, we have $i(G_0-N(I))= \alpha(G_0)$.
\item $\mu_\alpha(G)\leq 1$  if and only if $\mu_\alpha(G_0)\leq 1$ and  for every independent set $I$ in $G_1$ of size at most $2n_0$, we have $\alpha(G_0)-i(G_0-N(I))\leq 1$.
\end{enumerate}
\end{corollary}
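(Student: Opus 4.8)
The plan is to obtain Corollary~\ref{cor:type1main} as two specializations of Theorem~\ref{thm:gapk_characterization}, taking $k=0$ for part~1 and $k=1$ for part~2, together with a direct treatment of the degenerate case in which $G_0$ is the null graph (a case which the hypothesis of Theorem~\ref{thm:gapk_characterization} explicitly excludes).

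First I would assume $G_0$ is non-null. For part~1, I would note that ``$G$ is well-covered'' is by definition the statement $\mu_\alpha(G)=0$, which since the independence gap is never negative is the same as $\mu_\alpha(G)\le 0$; likewise ``$G_0$ is well-covered'' means $\mu_\alpha(G_0)\le 0$. Moreover, $i(H)\le\alpha(H)$ holds for every graph $H$, so the condition $i(G_0-N(I))=\alpha(G_0)$ is equivalent to $\alpha(G_0)-i(G_0-N(I))\le 0$. Substituting $k=0$ into Theorem~\ref{thm:gapk_characterization}, so that the size bound $(k+1)n_0$ becomes $n_0$, then yields part~1 verbatim. For part~2 I would simply set $k=1$ in Theorem~\ref{thm:gapk_characterization}, so that $(k+1)n_0=2n_0$, and read off the claimed equivalence directly.

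Next I would handle the case where $G_0$ is the null graph. Then, by the remark preceding Theorem~\ref{thm:gapk_characterization}, $G$ is well-covered, so $\mu_\alpha(G)=0$; in particular ``$G$ is well-covered'' and ``$\mu_\alpha(G)\le 1$'' both hold. It remains to check that the right-hand sides of both equivalences also hold. Here $n_0=0$ by the stated convention, so the only independent set $I\subseteq V(G_1)$ of size at most $n_0$ (respectively, at most $2n_0$) is $I=\emptyset$; then $G_0-N(I)=G_0$ is null, whence $i(G_0-N(I))=0=\alpha(G_0)$ and $\alpha(G_0)-i(G_0-N(I))=0\le 1$, while $\mu_\alpha(G_0)=0$. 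Thus both equivalences hold in this case as well, completing the proof. There is no real obstacle here; the only points needing care are the bookkeeping translation between ``$\mu_\alpha\le 0$'' / ``$i=\alpha$'' and the inequalities as written in the statement, and remembering to dispatch the $G_0$-null case separately since it falls outside the hypotheses of Theorem~\ref{thm:gapk_characterization}.
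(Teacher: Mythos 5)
Your proposal is correct and matches the paper's intent exactly: the corollary is stated there as an immediate specialization of Theorem~\ref{thm:gapk_characterization} with $k=0$ and $k=1$, which is precisely what you do. Your explicit handling of the case where $G_0$ is null (via the remark preceding the theorem and the convention $n_0=0$) is a careful touch the paper leaves implicit, but it does not constitute a different approach.
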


Another consequence of Theorem~\ref{thm:gapk_characterization} is the following.

\begin{corollary}\label{cor:G_0boundedPoly}
Let $G$ be a graph with internal vertices of types 0 and 1. If the size of every component of $G_0$ is bounded by a constant, then
for every constant $k\ge 0$, testing whether $\mu_\alpha(G)\leq k$ can be done in polynomial time.
\end{corollary}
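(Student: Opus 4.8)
The plan is to invoke Theorem~\ref{thm:gapk_characterization}. First I would dispose of the degenerate case: if $G_0$ is the null graph, then (as noted just before Theorem~\ref{thm:gapk_characterization}) $G$ is well-covered, hence $\mu_\alpha(G)=0\le k$ and we answer affirmatively. So assume $G_0$ is non-null, and let $c$ be the constant bounding the number of vertices in each component of $G_0$, so that $n_0\le c$. Setting up the relevant decomposition is routine: the set $U$ of vertices lying in complete components, the partition of $V(G-U)$ into leaves and internal vertices, and thus the subgraphs $G_0$ and $G_1$, can all be computed in polynomial time straight from the definitions (using the hypothesis that $G$ has internal vertices of types $0$ and $1$ only).

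By Theorem~\ref{thm:gapk_characterization}, $\mu_\alpha(G)\le k$ holds if and only if (i) $\mu_\alpha(G_0)\le k$, and (ii) $\alpha(G_0)-i(G_0-N(I))\le k$ for every independent set $I$ in $G_1$ with $|I|\le (k+1)n_0$. Since $k$ is a constant and $n_0\le c$, the quantity $(k+1)n_0$ is a constant, so the number of candidate sets $I$ that must be inspected is at most $\sum_{j=0}^{(k+1)c}\binom{|V(G_1)|}{j}=O\bigl(n^{(k+1)c}\bigr)$, which is polynomial in $n=|V(G)|$. For each candidate set $I$ one can test in polynomial time whether $I$ is independent (this is the same whether tested in $G_1$ or in $G$, since $G_1$ is an induced subgraph) and, if so, compute $N(I)$ and the induced subgraph $G_0-N(I)$.

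It then remains to note that all graph parameters occurring in conditions (i) and (ii) are computable in polynomial time. Every component of $G_0$ has at most $c$ vertices, and deleting a vertex set can only shrink components, so every component of $G_0-N(I)$ also has at most $c$ vertices. For any graph $H$ on at most $c$ vertices, $\alpha(H)$ and $i(H)$ can be found by enumerating its at most $2^c$ subsets, a constant-time computation; and since $\alpha$, $i$, and hence $\mu_\alpha$ are additive over connected components (the facts underlying Lemma~\ref{connected}), the values $\alpha(G_0)$, $\mu_\alpha(G_0)=\sum_j\mu_\alpha(H_j)$ over the components $H_j$ of $G_0$, and $i(G_0-N(I))$ are obtained in polynomial time by summing the per-component values. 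Assembling these steps yields the algorithm: compute $\mu_\alpha(G_0)$ and reject if it exceeds $k$; otherwise, for each independent set $I$ in $G_1$ of size at most $(k+1)n_0$, check whether $\alpha(G_0)-i(G_0-N(I))\le k$, rejecting if any $I$ fails this; accept if all checks pass. I do not expect a genuine obstacle here beyond the bookkeeping; the only point needing (mild) care is the observation that component sizes of $G_0$ stay bounded after deleting $N(I)$, which is precisely what keeps the inner $\alpha$/$i$ computations cheap.
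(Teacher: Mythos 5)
Your proposal is correct and follows essentially the same route as the paper: invoke Theorem~\ref{thm:gapk_characterization}, enumerate the polynomially many independent sets of $G_1$ of constant-bounded size, and exploit the bounded component sizes of $G_0$ (and of $G_0-N(I)$) to compute $\alpha$ and $i$ componentwise in constant time. The only differences are cosmetic (you spell out the null-$G_0$ case and use the bound $(k+1)n_0$ directly rather than $(k+1)c$), so nothing further is needed.
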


\begin{proof}
We assume that $G_0$ is non-null (since otherwise, as observed above, $G$ is well-covered and hence $\mu_\alpha(G)\le k$ holds for all non-negative $k$).
Suppose that every component of $G_0$ has at most $c$ vertices and let $H_1,\ldots,H_m$ be the components of $G_0$. Note that $G_0$, $G_1$, and the components of $G_0$ can be computed in linear time.
By Theorem~\ref{thm:gapk_characterization} it follows that $\mu_\alpha(G)\leq k$ if and only if $\mu_\alpha(G_0)\leq k$ and for every independent set $I$ in $G_1$ of size at most $(k+1)c$, it holds that $\alpha(G_0)-i(G_0-N(I))\leq k$.

Since the number of vertices of $H_j$ is bounded by a constant, the values of $\alpha(H_j)$ and $i(H_j)$ can be computed in constant time. Furthermore, since $\alpha(G_0)=\sum_{j=1}^m\alpha(H_j)$ and $i(G_0)=\sum_{j=1}^m i(H_j)$, it follows that verifying if $\mu_\alpha(G_0)=\alpha(G_0)-i(G_0)\leq k$ can be done in linear time.

Similarly, since there are at most $\mathcal{O}(|V(G_1)|^{(k+1)c})$ independent sets of $G_1$ of size at most $(k+1)c$, and for each such set $I$ the values $i(H_j-N(I))$ can be computed in constant time, the result follows.
\end{proof}

\section{Almost well-covered graphs without short cycles}\label{sec:no-short-cycles}

In this section we develop our main results, which relate to almost well-covered graphs without short cycles. In our proofs we will make use of the following result, characterizing well-covered graphs of girth at least $6$. Recall that a \emph{perfect matching} in a graph is a set of pairwise disjoint edges such that every vertex
of the graph is an endpoint of one of them, and that an edge is said to be \emph{pendant} if it is incident with a vertex of degree~$1$.

\begin{proposition}[\cite{MR1198396}]
\label{prop:wc_girth6}
Let $G$ be a connected graph of girth at least $6$ isomorphic to neither $C_7$ nor $K_1$.
Then $G$ is well-covered if and only if its pendant edges form a perfect matching.
\end{proposition}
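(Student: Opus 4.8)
My plan is to prove the two implications separately. The ``if'' direction is short and, it is worth noting, uses nothing about the girth. Suppose the pendant edges of $G$ form a perfect matching $M$. Then $V(G)$ is partitioned into the pairs $\{a,b\}$ with $ab\in M$; in each such pair fix an endpoint $a$ that is a leaf, so that $b$ is the unique neighbour of $a$. Let $I$ be any maximal independent set. For a pair $\{a,b\}$ as above, $a$ and $b$ are adjacent, so $|I\cap\{a,b\}|\le 1$; and if $b\notin I$ then $a$, having no neighbour other than $b$, must lie in $I$ by maximality of $I$. Hence $|I\cap\{a,b\}|=1$ for every pair, so $|I|=|M|=|V(G)|/2$. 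As every maximal independent set has this size, $G$ is well-covered.

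For the ``only if'' direction, assume $G$ is well-covered, connected, of girth at least $6$, and not isomorphic to $C_7$ or $K_1$. By Lemma~\ref{lem:leafcondition} applied with $k=0$, every internal vertex of $G$ is adjacent to at most one leaf, so the pendant edges of $G$ form a matching; since $G$ is connected, this matching is perfect precisely when every internal vertex has one leaf. Thus it remains to rule out an internal vertex with no leaf, and I would do this by induction on $|V(G)|$. If $G$ has no leaf (minimum degree at least $2$), the goal is to show $G\cong C_7$. The ingredients here are: girth at least $6$ forces, for a vertex $v$ with $N(v)=\{u_1,\dots,u_d\}$, that the sets $N[u_i]$ pairwise meet exactly in $\{v\}$ and that there is no edge between $N(u_i)\setminus\{v\}$ and $N(u_j)\setminus\{v\}$ for $i\ne j$; an exchange argument (take a maximal independent set $I\ni v$, delete $v$ together with those members of $I$ adjacent to $N(v)$, insert all of $N(v)$, extend to a maximal independent set, and invoke well-coveredness) that forces $N(v)$ to be dominated by $I\setminus\{v\}$ up to at most one exception; and the facts $\alpha(G-N[v])=\alpha(G)-1$ and that $G-N[v]$ is again well-covered of girth at least $6$ (Lemma~\ref{tool}). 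Iterating such exchanges along a shortest cycle should pin $G$ down to $C_7$; I expect this to be the most technical step.

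If instead $G$ has a leaf $\ell$, with neighbour $x$, then $N[\ell]=\{\ell,x\}$, and since $\{\ell\}$ is independent, $G-\{\ell,x\}$ is well-covered of girth at least $6$ by Lemma~\ref{tool}; let $C_1,\dots,C_m$ be its components (each well-covered of girth at least $6$, by Lemma~\ref{connected}). None of them is $K_1$: an isolated vertex of $G-\{\ell,x\}$ would be a leaf of $G$ other than $\ell$ adjacent to $x$, contradicting the ``at most one leaf'' property of $x$. None of them is $C_7$: girth at least $6$ forces $x$ to have at most one neighbour inside a $C_7$-component (two would create a cycle of length at most $5$ through $x$), so if such a component $C_j\cong C_7$ existed with $x$ adjacent to $p\in V(C_j)$, then $C_7-p\cong P_6$ would be a component of $G-N[x]$, and since $P_6$ is not well-covered this contradicts Lemma~\ref{tool} applied to $\{x\}$ (together with Lemma~\ref{connected}). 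Hence, by the induction hypothesis, the pendant edges of each $C_j$ form a perfect matching. The remaining point is that $x$ is adjacent only to internal vertices of the $C_j$'s: if $x$ were adjacent to a leaf $a$ of some component $C_j\not\cong K_2$, I would examine $G-N[x]$, which contains $C_j$ with a nonempty vertex set containing $a$ removed, and argue (via a parity count on the vertices, the induction hypothesis, and Lemma~\ref{tool}/Corollary~\ref{cor:components}) that such a graph cannot be well-covered, a contradiction; the case $C_j\cong K_2$ is disposed of directly, since $x$ adjacent to both endpoints of an edge of $G-\{\ell,x\}$ would yield a triangle. Once this is established, $\{x\ell\}$ together with the perfect matchings of the $C_j$'s is a perfect matching of the pendant edges of $G$.

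The two steps I expect to demand the most work are the leafless case, where the isomorphism with $C_7$ must be extracted from the girth constraints and the exchange argument, and the verification inside the inductive step that $x$ attaches only to internal vertices of the $C_j$'s --- in particular the subcase where $x$ has several neighbours inside one component, where one must again lean on girth at least $6$ to constrain the attachment pattern, and where one must be careful to decompose the various ``closed-neighbourhood-deleted'' graphs into connected pieces (via Lemma~\ref{connected}) before invoking the induction hypothesis.
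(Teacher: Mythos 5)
This proposition is not proved in the paper at all: it is quoted as a known result of Finbow, Hartnell and Nowakowski~\cite{MR1198396}, so there is no internal proof to compare your attempt with, and your proposal has to stand on its own as a proof of that cited theorem. Your ``if'' direction is correct and complete (and indeed needs no girth hypothesis). In the ``only if'' direction, the overall induction plan is reasonable, and several pieces are sound: Lemma~\ref{lem:leafcondition} with $k=0$ does give at most one leaf per internal vertex, and your exclusion of $K_1$- and $C_7$-components of $G-\{\ell,x\}$ (via the second-leaf contradiction and the $P_6$-component of $G-N[x]$, using Lemma~\ref{tool} and Lemma~\ref{connected}) is correct.

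The genuine gap is the leafless case. Showing that a connected well-covered graph of girth at least $6$ with minimum degree at least $2$ must be $C_7$ is precisely the hard core of the Finbow--Hartnell--Nowakowski theorem; your description --- local tree structure around a vertex, an exchange argument forcing $N(v)$ to be dominated up to one exception, $\alpha(G-N[v])=\alpha(G)-1$, and then ``iterating such exchanges along a shortest cycle should pin $G$ down to $C_7$'' --- is a plan, not an argument, and it is exactly where the published proof spends most of its effort (note also that $G-N[v]$ may acquire leaves or become disconnected, so this case does not simply recurse into itself, and nothing in your sketch bounds the length of a shortest cycle or the degrees). A second, smaller gap is the claim that $x$ has no leaf-neighbour in a component $C_j\not\cong K_2$: the ``parity count'' is only gestured at. This one is repairable --- e.g.\ if $a\in N(x)$ is a leaf of $C_j$ with stem $b$, then $b$ has a stem neighbour $s$ in $C_j$, and a maximal independent set of $G-N[x]$ (well-covered by Lemma~\ref{tool}) containing $s$ misses the pair $\{a,b\}$ entirely, while the set of all leaves of $C_j-N(x)$ together with $b$ gives a larger maximal independent set in the relevant component; but the multi-neighbour case does need the distance-$\ge 4$ consequence of girth $\ge 6$ spelled out, and as written none of this is done. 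Unless you intend to reconstruct the $\delta\ge 2$ analysis in full, the honest course is to do what the paper does and cite~\cite{MR1198396} for the statement.
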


Recall also that due to Corollary~\ref{cor:atmost2leaves}, in the study of almost well-covered graphs we may assume that the graph under consideration has no type $k$ vertices for any $k\ge 3$. If $G$ is an almost well-covered graph of girth at least $6$, then $G$ is triangle-free and by Lemma~\ref{lem:type 2 clique}, it follows that there are either $2$, $1$, or $0$ vertices of type $2$. The first two of these three cases are characterized in the next two subsections. Building on these characterizations, we consider in Section~\ref{subsec:notype2} the general case, but with the additional assumption that the $7$-cycle is also forbidden.

\subsection{Graphs of girth at least 6 with exactly two vertices of type 2}\label{subsec:2type2}

The following lemma shows that
in an almost well-covered graph of girth at least 6
there are no edges between vertices of type 2 and type 0.

\begin{lemma}\label{lem:key2G_2}
Let $G$ be an almost well-covered graph of girth at least 6. Then no vertex in $G_2$ is adjacent to a vertex in $G_0$.
\end{lemma}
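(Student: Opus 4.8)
The plan is to argue by contradiction: suppose $G$ is an almost well-covered graph of girth at least $6$ and that some vertex $x\in V(G_2)$ is adjacent to some vertex $z\in V(G_0)$. Let $b_1,b_2$ be the two leaves of $x$. The idea is to produce two maximal independent sets of $G$ whose sizes differ by at least $2$, contradicting $\mu_\alpha(G)=1$. Since $x$ has type $2$ and $z$ has type $0$, the vertex $z$ is internal and has no leaf neighbour; moreover, because the girth is at least $6$, the neighbourhoods $N(x)$ and $N(z)$ meet only in the edge $xz$ (no common neighbour, else a triangle or $C_4$).

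First I would build a ``small'' maximal independent set $I_1$ using $x$: start the greedy construction by putting $x$ in, which forces $b_1,b_2$ and all other neighbours of $x$ (in particular $z$) out, then extend greedily to a maximal independent set $I_1$ of $G$ containing $x$ but not $z$. Next I would build a ``large'' maximal independent set $I_2$ that avoids $x$: the natural move is to remove $N[x]$ and work in $G - N[x]$. Here I would like to invoke Lemma~\ref{tool}, which tells us that $\mu_\alpha(G - N[\{x\}])\le \mu_\alpha(G) = 1$, so $G-N[x]$ is itself almost well-covered (or well-covered). The crucial observation is that $z\in V(G-N[x])$ and that in $G-N[x]$ the vertex $z$ has lost only the neighbour $x$, so $z$ is still internal there with no leaf; also $b_1,b_2$ are no longer present. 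I would then take a maximal independent set $I_2'$ of $G-N[x]$ and form $I_2 = I_2'\cup\{b_1,b_2\}$, which is maximal in $G$: indeed $b_1,b_2$ dominate only $x$, and $x$ is dominated because at least one neighbour of $x$ other than $b_1,b_2$ lies in $I_2'$ or else $z$ would have to be in $I_2'$ which is fine too — I need to be careful that $I_2$ really is an independent set dominating everything, which it is since $I_2'$ dominates $V(G)\setminus N[x]$ and $\{b_1,b_2\}$ together with $I_2'$ dominates $N[x]$.

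The heart of the argument is then a counting comparison between $|I_1|$ and $|I_2|$. On the one side, $I_1$ contains $x$ and misses $b_1,b_2,z$ and the rest of $N(x)$; restricted to $V(G)\setminus N[x]$ it is a maximal independent set of $G-N[x]$ of size $|I_1|-1$. On the other side $|I_2| = |I_2'| + 2$ where $I_2'$ is a maximal independent set of $G-N[x]$. Since $\mu_\alpha(G-N[x])\le 1$, the sizes of any two maximal independent sets of $G-N[x]$ differ by at most $1$, so $|I_2'|\ge (|I_1|-1) - 1 = |I_1|-2$, giving $|I_2|\ge |I_1|$. That alone is not a contradiction, so I expect the main obstacle — and the real content of the lemma — to be squeezing out the extra unit: I need to choose $I_1$ and $I_2'$ so that either $I_1$ restricted to $G-N[x]$ is a \emph{minimum} maximal independent set there while $I_2'$ is a \emph{maximum} one, or else exploit the structure of $z$ to show that whenever a maximal independent set of $G-N[x]$ has small size it must include $z$, whereas $x$'s maximal independent set does not, forcing a genuine gap of $2$. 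Concretely, I would pick $I_1$ so that $I_1\setminus\{x\}$ is a minimum maximal independent set of $G-N[x]-\{$ nothing $\}$... actually $I_1\setminus\{x\}$ is forced to be a maximal independent set of $G - N[x]$, and I would pick it to be one of minimum size, call it size $i(G-N[x])$; then pick $I_2' $ of maximum size $\alpha(G-N[x])$. Then $|I_1| = i(G-N[x]) + 1$ and $|I_2| = \alpha(G-N[x]) + 2 \ge i(G-N[x]) + 2 + \big(\alpha(G-N[x]) - i(G-N[x])\big)\cdot 0$... to finish I must rule out $\alpha(G-N[x]) = i(G-N[x])$, i.e. rule out that $G-N[x]$ is well-covered; if it is well-covered, I would then use Proposition~\ref{prop:wc_girth6} on the component of $G-N[x]$ containing $z$ — it has girth at least $6$, it is not $K_1$ (it contains the internal vertex $z$), so its pendant edges form a perfect matching, but $z$ has no leaf in $G-N[x]$, hence $z$ must be matched to an internal... this forces a contradiction with $z$ being internal-with-no-pendant, or more carefully forces every vertex including $z$ to be incident to a pendant edge, contradicting that $z$ has no leaf neighbour in $G - N[x]$ (its only lost neighbour was $x$, which was not a leaf of $z$). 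Wrapping this up — handling the case where $z$'s component in $G-N[x]$ is $C_7$ or verifying the pendant-matching obstruction cleanly — is the step I expect to require the most care.
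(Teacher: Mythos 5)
There is a genuine gap, and it sits exactly where you predicted the difficulty would be. Your whole mechanism for extracting the second unit of gap rests on the claim that ``$z\in V(G-N[x])$ and in $G-N[x]$ the vertex $z$ has lost only the neighbour $x$.'' This is false: by hypothesis $z$ is \emph{adjacent} to $x$, so $z\in N[x]$ and $z$ is deleted together with $b_1,b_2$ and the rest of $N(x)$ when you pass to $G-N[x]$. Consequently the finishing step --- applying Proposition~\ref{prop:wc_girth6} to the component of $G-N[x]$ containing $z$ and arguing that $z$ is internal there with no pendant edge --- cannot even be formulated, and with it the only argument you offer for ``$G-N[x]$ is not well-covered'' collapses. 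Your counting reduction is fine as far as it goes (a minimum maximal independent set of $G-N[x]$ plus $x$, versus a maximum one plus $\{b_1,b_2\}$, yields a contradiction \emph{provided} $G-N[x]$ is not well-covered; note also that $I_2'\cup\{b_1,b_2\}$ need not be maximal since vertices of $N(x)\setminus\{b_1,b_2\}$ may be undominated, but extending only increases the size, so that part is harmless). But Lemma~\ref{tool} only gives $\mu_\alpha(G-N[x])\le 1$, which is the wrong direction, and no valid argument is supplied for the needed strict inequality $\alpha(G-N[x])>i(G-N[x])$. Notice also that in your write-up the hypothesis that $z$ has type~$0$ is used only in the broken step, which is a warning sign: that hypothesis must do real work somewhere.

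For contrast, the paper keeps $z$ alive and instead deletes the closed neighbourhood of a different independent set: let $I$ be the set of vertices at distance $3$ from $x$ and distance $2$ from $z$ (independent because the girth is at least $6$). Since $z$ has type~$0$, every neighbour $w\ne x$ of $z$ is not a leaf, and girth at least $6$ forces some further neighbour of $w$ to lie in $I$; hence in $G'=G-N[I]$ all neighbours of $z$ except $x$ disappear and $z$ becomes a third leaf of $x$ alongside $x_1,x_2$. Then Lemma~\ref{lem:leafcondition} gives $\mu_\alpha(G')\ge 2$, contradicting Lemma~\ref{tool}. If you want to salvage your route you would have to prove, from the almost-well-coveredness of $G$ and the presence of the edge $xz$, that $G-N[x]$ cannot be well-covered --- a statement for which you currently have no argument --- so the distance-based deletion is the step your proposal is missing.
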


\begin{proof}
Suppose that $G$ is an almost well-covered graph and $x\in V(G_2)$ with $N(x)\cap V(G_0)\neq \emptyset$. Since $x\in V(G_2)$, it follows that $x$ is adjacent with two leaves, say $x_1$ and $x_2$. Let $y\in N(x)\cap V(G_0)$. Let $I$ be the set of vertices at distance 3 from $x$ and at distance 2 from $y$. Then $I$ is an independent set since $G$ is of girth at least 6. Consider now the graph $G'=G-N[I]$. Observe that $y$ is a leaf in $G'$ because $y$ is an internal vertex such that all of its neighbors apart from $x$ are in $N(I)$. Therefore, $x$ is adjacent to at least three leaves in $G'$, namely, $x_1$, $x_2$, and $y$. By  Lemma~\ref{lem:leafcondition}, it follows that
 $\mu_\alpha(G')\geq 2$, which by Lemma~\ref{tool} contradicts the assumption that $G$ is almost well-covered.
\end{proof}

Proposition~\ref{prop:wc_girth6} and Corollary~\ref{cor:components} imply that the study of
almost well-covered graphs of girth at least 6 with exactly two type 2 vertices reduces to the connected case.

\begin{theorem}\label{thm:2G2Girth6}
Let $G$ be a connected graph of girth at least 6, with exactly two vertices $x,y$ of type $2$, and with no type $k$ vertices for $k\geq 3$.
Then $G$ is almost well-covered if and only if $x$ and $y$ are adjacent and one of the following two conditions holds:
\begin{enumerate}
  \item $V(G_0) = \emptyset$;
  \item $G_0\cong K_2$, neither of $x$ and $y$ has a neighbor in $G_0$, and
  the two vertices of $G_0$
  are of degree~$2$ in $G$ and
  are contained in an induced $6$-cycle containing $x$ and $y$.
\end{enumerate}
\end{theorem}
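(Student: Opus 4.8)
The plan is to prove both directions by exploiting the structure forced by girth at least $6$ together with the tools already developed, especially Corollary~\ref{cor:type1main}, Lemma~\ref{lem:type 2 clique}, and Lemma~\ref{lem:key2G_2}. First, for the forward direction, assume $G$ is almost well-covered. By Lemma~\ref{lem:type 2 clique}, $G_2$ is complete, so since $x$ and $y$ are the only type $2$ vertices, they must be adjacent. By Lemma~\ref{lem:key2G_2}, no vertex of $G_2$ has a neighbor in $G_0$, so neither $x$ nor $y$ has a neighbor in $G_0$. It remains to analyze $G_0$: I will show that either $G_0$ is empty, or $G_0 \cong K_2$ with the two vertices sitting in a common induced $6$-cycle through $x$ and $y$.

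For the case $V(G_0)\ne\emptyset$: since $G$ has only type $0,1,2$ vertices, consider removing the closed neighborhood of a suitable independent set to reduce to the type-$0/1$ situation. Concretely, let $x_1,x_2$ and $y_1,y_2$ be the leaves of $x$ and $y$; pick the independent set $I = \{x_1,y_1\}$ (these are leaves, hence non-adjacent). In $G' = G - N[I] = G - \{x, y, x_1, x_2, y_1, y_2\}$ (using that $x$'s only non-leaf neighbor besides $y$ cannot be a leaf-neighbor of $x$ by girth — actually one must be careful here: $x$ and $y$ may have other internal neighbors, so $N[I]$ removes exactly $x,y$ and their four leaves), the graph $G'$ inherits $\mu_\alpha(G')\le 1$ by Lemma~\ref{tool}, and now the vertices of $G_0$ that were adjacent to $x$ or $y$ — there are none, by Lemma~\ref{lem:key2G_2} — so $G_0$ is unchanged and $G'$ has internal vertices of types $0$ and $1$ only. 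Then apply Corollary~\ref{cor:type1main}(2): $\mu_\alpha(G_0)\le 1$ and the domination condition holds. Combined with Proposition~\ref{prop:wc_girth6} applied to components of $G_0$ (which have girth at least $6$), and with a counting/augmentation argument pushing the two "extra" leaves of $x$ and $y$ through $G_0$, one forces $G_0$ to be a single $K_2$; the girth-$6$ condition and connectivity then force the two vertices of $G_0$ to have degree $2$ and to lie on an induced $6$-cycle with $x$ and $y$ (the two degree-$2$ vertices of $G_0$, being $K_2$ with no leaf-attachments and no neighbor among $\{x,y\}$, must connect back to the rest of $G$; chasing the shortest route and using girth $\ge 6$ together with connectedness yields exactly the $6$-cycle $x - x' - a - b - y' - y - x$).

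For the converse, assume $x,y$ are adjacent and one of the two conditions holds; I must show $\mu_\alpha(G) = 1$. The lower bound $\mu_\alpha(G)\ge 1$ is easy: take a maximal independent set $I$ containing $x$ (hence not $y$, nor the four leaves), and swap $x$ out for $x_1,x_2$ (and $y$ stays excluded but $y_1$ enters, etc.) to produce two maximal independent sets differing in size — or more simply, a maximal independent set containing $\{x,y\}$ versus one containing $\{x_1,x_2,y_1,y_2\}$ differs by $2 - 4 = -2$... so I'd instead compare a maximal independent set containing $x$ with the one obtained by deleting $x$ and adding $x_1,x_2$ and then re-saturating: this is genuinely a difference of one if $y$ is forced in both times, which it is when $x\notin I$. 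For the upper bound $\mu_\alpha(G)\le 1$, in case (1) every internal vertex has type $0$ or $1$ and $G_0$ is null, so $G$ is well-covered by the remark preceding Theorem~\ref{thm:gapk_characterization} after removing $x,y$ — more precisely, $G - N[\{x_1,y_1\}]$ has $G_0$ null, hence is well-covered, and a direct argument (every maximal independent set picks exactly one of each pendant pair, plus exactly one of $\{x,y\}$ up to the two leaves) gives $\mu_\alpha(G) = 1$; in case (2) one uses Corollary~\ref{cor:type1main}(2) on $G - N[\{x_1,y_1\}]$ with $n_0 = 2$, checking that for every independent set $I$ of $G_1$ of size at most $4$ we have $\alpha(G_0) - i(G_0 - N(I)) \le 1$, which is immediate since $\alpha(G_0) = 1$ for $G_0 \cong K_2$.

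The main obstacle I anticipate is the forward direction's structural pinning-down of $G_0$: ruling out $G_0$ having a component larger than $K_2$, or two separate $K_2$ components, or a $K_2$ whose vertices have degree $\ge 3$, and then extracting the precise induced $6$-cycle. The key leverage will be that each "extra" pair of leaves at $x$ and $y$ can be traded against an independent set dominating part of $G_0$, and any slack in $G_0$ (an internal vertex of $G_0$ without a pendant neighbor, by Proposition~\ref{prop:wc_girth6} a component of $G_0$ that is not well-covered in the pendant-matching sense, or simply enough room to build a maximal independent set of $G_0$ much larger than its minimum after deleting a dominated set) combines with the two extra leaves to push the gap to $2$. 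Making this trade-off quantitatively tight — exactly one $K_2$, exactly the $6$-cycle — while respecting girth $\ge 6$ and connectivity is where the real work lies; the rest is bookkeeping with the lemmas from Section~\ref{sec:general}.
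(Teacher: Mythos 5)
Your proposal gets the easy parts right (adjacency of $x$ and $y$ via Lemma~\ref{lem:type 2 clique}, absence of $G_0$-neighbors via Lemma~\ref{lem:key2G_2}, the lower bound $\mu_\alpha(G)\geq 1$, and the size count in case~1 of the converse), but the heart of the theorem is missing. In the forward direction you never actually prove that $G_0$ is a single $K_2$ whose two vertices have degree~$2$ and lie on an induced $6$-cycle through $x$ and $y$; you describe this as ``one forces $G_0$ to be a single $K_2$'' and explicitly defer it as ``where the real work lies.'' This is precisely the content of the paper's argument, and it is not bookkeeping: the paper deletes $N[I]$ for the much larger independent set $I=\{x',x'',y',y''\}\cup A'\cup B'$ (the leaves of $x$, $y$, and of all their type-$1$ neighbors), so that the residual graph $G'$ consists only of far-away type-$1$ vertices, their leaves, and $G_0$; it then proves $G'$ is \emph{well-covered} (gap zero, via a swap using both leaves of $x$), rules out $C_7$ components by an explicit independent-set construction, shows every vertex of $G_0$ is a leaf of $G'$ (hence every $G_0$-component is a $K_2$), and excludes a second $K_2$ by taking a maximal independent set containing $A\cup B$, which dominates all four $G_0$-vertices. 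Your weaker route — deleting only $N[\{x_1,y_1\}]$ and invoking Corollary~\ref{cor:type1main}(2) — yields only $\mu_\alpha(G_0)\leq 1$ plus domination conditions, which does not justify your subsequent appeal to Proposition~\ref{prop:wc_girth6} ``applied to components of $G_0$'': that proposition characterizes well-covered graphs, and well-coveredness of $G_0$ (or of a suitable residual graph) is exactly what still has to be established.

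There is also a logical flaw in your converse for case~2: applying Corollary~\ref{cor:type1main}(2) to $H=G-N[\{x_1,y_1\}]$ can only give $\mu_\alpha(H)\leq 1$, and no lemma in the paper lets you transfer this to $\mu_\alpha(G)\leq 1$ — Lemma~\ref{tool} goes in the opposite direction. Concretely, maximal independent sets of $G$ containing $x$ or $y$ do not arise from maximal independent sets of $H$, so your argument leaves them unbounded; the paper instead handles the converse by a direct case analysis showing every maximal independent set of $G$ has size $|V(G_1)|+4$ or $|V(G_1)|+5$. (A minor slip in the same place: $N[\{x_1,y_1\}]=\{x,y,x_1,y_1\}$, so $x_2,y_2$ survive as isolated vertices rather than being deleted.) In summary, the proposal is a plausible plan whose routine steps are correct, but the structural pinning-down of $G_0$ and the completion of the converse in case~2 — the genuinely hard parts — are absent or rest on an inference that does not hold.
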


\begin{proof}
$(\Rightarrow)$
Suppose first that $G$ is an almost well-covered graph satisfying the hypotheses of the theorem.
By Lemma \ref{lem:type 2 clique}, $x$ and $y$ are adjacent.
From now on, we assume that $V(G_0)\neq \emptyset$, since otherwise condition \textit{1.}~holds.
Let $x',x''$ and $y',y''$ be the two leaves adjacent to $x$ and $y$, respectively.
Let $A$ and $B$ be the (disjoint) sets of the remaining neighbours of $x$ and $y$, respectively.
Then by Lemma~\ref{lem:key2G_2}, each $v\in A\cup B$ is of type 1. Let $A'$ and $B'$ be the sets of leaves adjacent to vertices from $A$ and $B$, respectively. Let $I=\{x',x'',y',y''\}\cup A'\cup B'$. Observe that $I$ is an independent set in $G$. Let $G'=G-N[I]$.
See Fig.~\ref{fig:2type2-1}$(i)$ for an illustration.

\begin{figure}[htpb]
\begin{center}
\includegraphics[width=\linewidth]{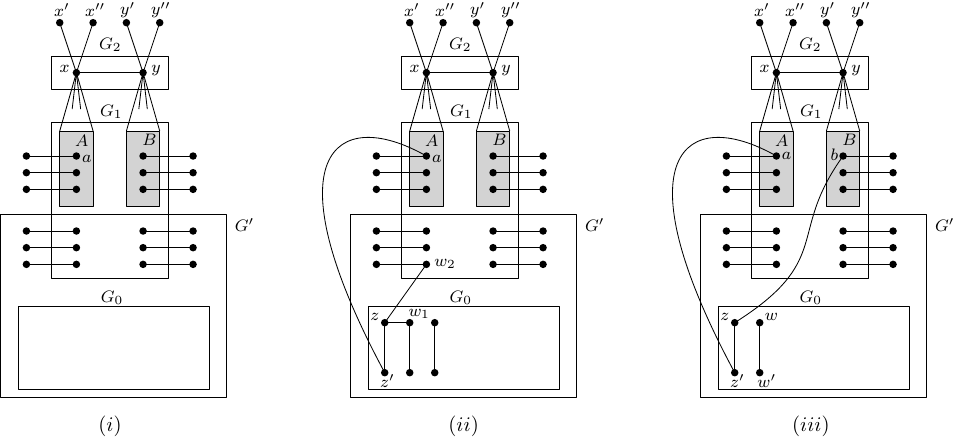}
\caption{Illustration of the proof of Theorem~\ref{thm:2G2Girth6}.}\label{fig:2type2-1}
\end{center}
\end{figure}

Observe that by definition of $G'$, for every vertex $u\in V(G')\cap V(G_1)$, its unique leaf neighbor in $G$ is a vertex of $G'$. Moreover, the vertex set of $G'$ consists exactly of the vertices in $V(G')\cap V(G_1)$, their leaves in $G$, and all the vertices in $G_0$.

We claim that $G'$ is well-covered. Suppose that $I'_1$ and $I'_2$ are two maximal independent sets in $G'$ with $|I_1'|>|I_2'|$. Let $I_1=I_1'\cup A'\cup  B'\cup \{x',x'',y',y''\}$ and $I_2=I_2'\cup A'\cup  B' \cup \{x,y',y''\}$. It is clear that $I_1$ and $I_2$ are maximal independent sets in $G$ and $|I_1|-|I_2|\geq 2$. This contradicts the assumption that $G$ is almost well-covered, and hence $G'$ is well-covered. Observe that $\alpha(G)=\alpha(G')+|A|+|B|+4$.

Since $G'$ is well-covered and the girth of $G'$ is at least 6, by Proposition~\ref{prop:wc_girth6}, every component of $G'$ is isomorphic to $K_1$, $C_7$, or to a graph with a perfect matching formed by pendant edges. Since the girth of $G$ is at least 6, it is easy to see that no component of $G'$ is isomorphic to $K_1$.
We claim that there is no component of $G'$ isomorphic to $C_7$. Suppose to the contrary that vertices $v_1,v_2,v_3,v_4,v_5,v_6,v_7$ form a 7-cycle $C$ in $G'$ (in this order).  Since $G$ is connected, we may assume without loss of generality~there exists a vertex $a\in A$ such that $a$ is adjacent to $v_1$. Due to the girth condition, $a$ is not adjacent to any other vertex in the 7-cycle $C$. Extend $\{v_3,v_6\}$ to  a maximal independent set  $J'$ in $G'-N(a)$. Observe that $|J'|<\alpha (G')$ since $J'$ contains exactly two vertices from the 7-cycle $C$, while any maximal independent set of $G'$ contains three vertices from $C$. Extend $J'\cup \{a,y\}$ to a maximum independent set $J$ in $G$. It is now easy to see that $|J|<\alpha(G')+|A|+|B|+3=\alpha(G)-1$. This contradicts the assumption that $G$ is almost well-covered and proves our claim that no component of $G'$ is isomorphic to $C_7$.

Since no component of $G'$ is isomorphic to a $K_1$ or a $C_7$, we infer that $G'$ has a perfect matching $M$ formed by pendant edges.
Clearly, every edge connecting a vertex in $V(G')\cap V(G_1)$ with its leaf neighbor is in $M$.

Next, we show that every component of $G_0$ is a component of $G'$ isomorphic to a $K_2$.
To show this, it suffices to show that every vertex of $G_0$ is a leaf in $G'$. Suppose that this is not the case. Then, there exists an edge $zz'\in M$ such that $z,z'\in V(G_0)$, vertex $z'$ is a leaf in $G'$, and there exists a vertex $w\neq z'$ in $G'$ such that $z$ is adjacent to $w$.
Note that $w$ belongs to either $V(G_0)$ or to $V(G')\cap V(G_1)$.
We may assume without loss of generality that there exists a vertex $a\in A$ such that $z'$ is adjacent to $a$ (see Fig.~\ref{fig:2type2-1}$(ii)$; the two possibilities for $w$ are denoted by $w_1$ and $w_2$). The girth condition implies that $w$ and $a$ are not adjacent in $G$.
Let $I$ be a maximal independent set in $G$ containing $\{a,w,y\}$. It is clear that $|I\cap V(G')|<\alpha(G')$ since neither $z$ nor $z'$ is in $I$; therefore, $|I|=|I\cap V(G')|+ |A|+|B|+3<\alpha(G)-1$, contradicting the assumption that $G$ is almost well-covered.

Let $z$ and $z'$ be two adjacent vertices in $G_0$. Since the component of $G_0$ containing $z$ and $z'$ is a component of $G'$ isomorphic to $K_2$ and $z$ and $z'$ are internal vertices of $G$, each of them has a neighbor in $A\cup B$. As above, we may assume without loss of generality that there exists a vertex $a\in A$ such that $z'$ is adjacent to $a$. Moreover, there exists a neighbour $b$ of $z$ within $A\cup B$. The girth condition implies that $b\in B$. Since the girth of $G$ is at least $6$, it follows that $z$ and $z'$ have no other neighbours in $A\cup B$, hence they are of degree $2$ in $G$, and they lie on an induced $6$-cycle containing $x$ and $y$. See Fig.~\ref{fig:2type2-1}$(iii)$ for an illustration.

To complete the proof of the forward direction of the implication, it remains to show that $V(G_0) =\{z,z'\}$.
Suppose that this is not the case and consider a pair $w,w'$ of adjacent vertices in $G_0-\{z,z'\}$.
Since the component of $G_0$ containing $w$ and $w'$ is a component of $G'$ isomorphic to $K_2$ and $w$ and $w'$ are internal vertices of $G$, each of them has a neighbor in $A\cup B$. Consequently, $A\cup B$ is an independent set in $G$ that dominates $\{z,z',w,w'\}$. If $I$ is a maximal independent set in $G$ containing $A\cup B$, then $|I\cap V(G')|\leq \alpha(G')-2$ since $I\cap V(G')$ contains neither of $z$, $z'$, $w$, $w'$ while any maximum independent set in $G'$ has to contain one of $z$, $z'$ and one of $w$, $w'$. Recall that $\alpha(G)=\alpha(G')+|A|+|B|+4$. Therefore, $|I|<\alpha(G)-1$, which contradicts the assumption that $G$ is almost well-covered.

$(\Leftarrow)$ Let $G$ be a connected graph of girth at least $6$ with exactly two vertices $x,y$ of type 2 which are adjacent, and with no type $k$ vertices for $k\geq 3$. Let $x',x''$ and $y',y''$ be the two leaves adjacent to $x$ and $y$, respectively. Suppose first that $V(G_0)=\emptyset$.
Let $I$ be a maximal independent set in $G$. If $I\cap \{x,y\}=\emptyset$,  then it is easy to see that $|I|=|V(G_1)|+4$, and if $I\cap \{x,y\} \neq \emptyset$,  then $|I|=|V(G_1)|+3$. Hence $\mu_\alpha(G)\leq 1$, and by Lemma~\ref{lem:leafcondition} the existence of type 2 vertices implies that $G$ is not well-covered. Hence $G$ is almost well-covered.

Suppose now that condition \textit{2.}~from the theorem holds, and let $u$ and $v$ be the two vertices of $G_0$.
Since neither of $x$ and $y$ has a neighbor in $G_0$ by Lemma \ref{lem:key2G_2} and $u$ and $v$ are of degree $2$ in $G$,
we may assume without loss of generality that the induced $6$-cycle containing $u$, $v$, $x$, and $y$ must be of the form
$x,a,u,v,b,y$ for some $a\in N(x)\cap V(G_1)$ and $b\in N(y)\cap V(G_1)$, see Fig.~\ref{fig:2type2-K2} for an illustration.
\begin{figure}[htpb]
\begin{center}
\includegraphics[width=0.5\linewidth]{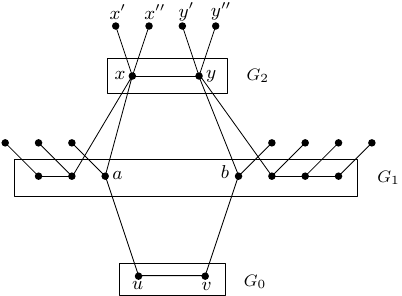}
\caption{Illustration of a graph with $G_0\cong K_2$.}\label{fig:2type2-K2}
\end{center}
\end{figure}
Let $I$ be a maximal independent set in $G$,  $I'=I\cap \{x,x',x'',y,y',y''\}$ and $I''=I\setminus I'$.
Suppose first that $I\cap \{u,v\}=\emptyset$. Then since $I$ is maximal, it follows that both $a$ and $b$ are contained in $I$.
Therefore, $I\cap \{x,y\}=\emptyset$, implying that $|I'|=4$. Since $I\cap \{u,v\}=\emptyset$, we get $|I''|=|V(G_1)|$ and hence $|I|= |V(G_1)|+4$.
Suppose now that $I\cap \{u,v\}\neq\emptyset$. In this case, $|I''|=|V(G_1)|+1$. Moreover, it is clear that $|I'|\in \{3,4\}$.
Thus, in both cases we have $|I|\in \{|V(G_1)|+4,|V(G_1)|+5\}$,
and the existence of type 2 vertices implies that $G$ is almost well-covered.
\end{proof}

Note that the two conditions from Theorem~\ref{thm:2G2Girth6} can be efficiently tested.
Therefore, using Corollary~\ref{cor:components} and Proposition~\ref{prop:wc_girth6},
we obtain the following algorithmic consequence.

\begin{corollary}\label{cor:polyrecognition2type2}
Given a graph $G$ of girth at least $6$ and with exactly two type 2 vertices, it can be decided in polynomial time whether $G$ is almost well-covered.
\end{corollary}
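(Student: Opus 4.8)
The plan is to combine Corollary~\ref{cor:components}, which reduces the problem to the connected case, with the structural characterization of Theorem~\ref{thm:2G2Girth6} and the well-covered characterization of Proposition~\ref{prop:wc_girth6}. First I would compute the connected components of $G$ in linear time; by Corollary~\ref{cor:components}, $G$ is almost well-covered if and only if exactly one component is almost well-covered and all others are well-covered. Since the two type $2$ vertices lie in (at most) one component, every other component has no type $2$ vertices, and one can test whether such a component is well-covered using Proposition~\ref{prop:wc_girth6}: a connected graph of girth at least $6$ other than $C_7$ or $K_1$ is well-covered iff its pendant edges form a perfect matching, a condition clearly checkable in polynomial time (and testing isomorphism to $C_7$ or $K_1$ is trivial). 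So the only component that requires real work is the one (if any) containing the type $2$ vertices; if the two type $2$ vertices lie in different components, or if some other component fails the well-covered test, we answer ``no''.

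For the distinguished component $H$ containing both type $2$ vertices $x$ and $y$, apply Theorem~\ref{thm:2G2Girth6}: $H$ is almost well-covered iff $x$ and $y$ are adjacent and either (1) $V(H_0)=\emptyset$, or (2) $H_0\cong K_2$, neither $x$ nor $y$ has a neighbor in $H_0$, and the two vertices of $H_0$ have degree $2$ in $H$ and lie on an induced $6$-cycle through $x$ and $y$. Here I would first identify the set $U$ of vertices in complete components (not relevant for a non-complete connected $H$ with type $2$ vertices) and then compute $H_0$ and $H_1$ directly from the definition: classify each non-leaf vertex by its number of leaf neighbors, which takes linear time. Checking adjacency of $x$ and $y$, emptiness of $V(H_0)$, the isomorphism $H_0\cong K_2$, the absence of $H_0$-neighbors of $x$ and $y$, and the degree-$2$ condition are all trivial. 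The one slightly less immediate check is the existence of an induced $6$-cycle through $x$, $y$, and the two vertices $u,v$ of $H_0$; but since $u$ and $v$ have degree exactly $2$ and are adjacent to each other, such a cycle, if it exists, must have the form $x,a,u,v,b,y$ with $a\in N(x)$, $b\in N(y)$, $a\neq b$, $au,vb\in E(H)$, and no chords — this is a constant-size search over neighbors of $u$ and $v$ and is verified in polynomial time.

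The main obstacle, such as it is, is bookkeeping rather than depth: one must be careful that ``type'' is defined relative to the whole graph $G$ (after removing complete components $U$), whereas Theorem~\ref{thm:2G2Girth6} speaks of a connected graph, so I should check that for a non-complete connected component $H$ the type of a vertex computed within $H$ agrees with its type in $G$ — which it does, because $U$ contributes nothing and leaves/internal-vertex status is a local property within the component. With that observation in place, every condition appearing in Theorem~\ref{thm:2G2Girth6} and Proposition~\ref{prop:wc_girth6} is a polynomially testable predicate, and the overall algorithm runs in polynomial time. This completes the proof.
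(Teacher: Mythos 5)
Your proposal is correct and follows essentially the same route as the paper: reduce to components via Corollary~\ref{cor:components}, test the components without type $2$ vertices using Proposition~\ref{prop:wc_girth6}, and test the component containing the two type $2$ vertices against the conditions of Theorem~\ref{thm:2G2Girth6}, all of which are polynomially checkable. The only (minor) thing worth adding is the preliminary check that no internal vertex has type $k\ge 3$ (rejecting otherwise by Corollary~\ref{cor:atmost2leaves}), since that is a hypothesis of Theorem~\ref{thm:2G2Girth6}.
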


\subsection{Graphs of girth at least 6 with exactly one vertex of type 2}\label{subsec:1type2}

We now study almost well-covered graphs of girth at least 6 and with exactly one type 2 vertex. A characterization of such graphs is given by the following theorem.

\begin{theorem}\label{thm:1type2}
Let $G$ be a graph of girth at least 6, with exactly one type $2$ vertex $x$, and with no type $k$ vertices for $k\geq 3$.
Then $G$ is almost well-covered if and only if every internal vertex of $G$ other than $x$ is of type 0 or 1, $N(x)\cap V(G_0)=\emptyset$, $G_0$ is well-covered, and for every independent set $I$ in $G_1$ the following holds:
\begin{enumerate}[(a)]
\item $\alpha(G_0)=i(G_0-N(I))$ if $I\cap N(x)=\emptyset$ and
\item $\alpha(G_0)-i(G_0-N(I))\leq 1$ if $I\cap N(x)\neq \emptyset$.
\end{enumerate}
\end{theorem}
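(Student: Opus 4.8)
The plan is to reduce this theorem to the machinery already developed in Corollary~\ref{cor:type1main}(1), exploiting the fact that the single type~$2$ vertex $x$ is adjacent to exactly two leaves, and that — as we will first show — $x$ is never adjacent to a vertex of $G_0$. The forward direction ($\Rightarrow$): assume $G$ is almost well-covered. By Corollary~\ref{cor:atmost2leaves} there are no type $k\ge 3$ vertices, so all internal vertices other than $x$ are of type $0$ or $1$; this is part of the conclusion. The claim $N(x)\cap V(G_0)=\emptyset$ should be proved exactly as in Lemma~\ref{lem:key2G_2}: if $x$ (with its two leaves $x_1,x_2$) had a neighbor $y\in V(G_0)$, take $I$ to be the set of vertices at distance $3$ from $x$ and distance $2$ from $y$; the girth-$6$ condition makes $I$ independent, and in $G'=G-N[I]$ the vertex $y$ becomes a third leaf of $x$, so $\mu_\alpha(G')\ge 2$ by Lemma~\ref{lem:leafcondition}, contradicting Lemma~\ref{tool}. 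To get the remaining conditions, I would form the graph $H = G - N[\{x_1,x_2\}] = G - \{x,x_1,x_2\}$. In $H$, every internal vertex is of type $0$ or $1$ (here one must check that deleting $x$ does not create new leaves of the wrong kind, which is where $N(x)\cap V(G_0)=\emptyset$ is used: the former neighbors of $x$ in $G_1$ become type-$1$ vertices of $H$ that keep their own leaf, and vertices of $G_0$ keep all their neighbors), and crucially $H_0 = G_0$. Since $\{x_1,x_2\}$ is an independent set, Lemma~\ref{tool} gives $\mu_\alpha(H)\le 1$; but in fact, by analyzing maximal independent sets of $G$ according to whether they contain $x$, one sees $\mu_\alpha(G)=1$ forces $H$ to be well-covered. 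Then Corollary~\ref{cor:type1main}(1) applied to $H$ yields that $G_0$ is well-covered and that $\alpha(G_0)=i(G_0-N(I))$ for every independent set $I$ in $H_1$. Since $V(H_1)=V(G_1)$, this is precisely condition (a) for independent sets $I$ in $G_1$ with $I\cap N(x)=\emptyset$ — but we still need (a) for \emph{all} $I$ disjoint from $N(x)$ and (b) for $I$ meeting $N(x)$, and these are about $G_1$, not $H_1$, so a separate argument inserting $x$ or $x_1,x_2$ appropriately is needed.

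For condition (b): given an independent set $I$ in $G_1$ with $I\cap N(x)\ne\emptyset$, suppose for contradiction $\alpha(G_0)-i(G_0-N(I))\ge 2$. Take a minimum maximal independent set $I_0$ of $G_0-N(I)$ and a maximum independent set $I_0^*$ of $G_0$. Build a large maximal independent set of $G$ from $I_0^*$ together with all leaves (including $x_1,x_2$) and $I\setminus(\text{something})$ — carefully so that it is maximal and has size $\alpha(G_0)+|L|$ where $L$ is the leaf set — and a small one from $I_0 \cup I \cup (L\setminus N(I))$, noting that since $I$ hits $N(x)$, neither $x$ nor $x_1,x_2$ is forced in, and the small set has size $i(G_0-N(I))+|L|-2+(\text{correction})$; comparing gives gap $\ge 2$ (the ``$-2$'' is compensated because the large set can include $x_1,x_2$ while the small one cannot include $x$). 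The bookkeeping here — making sure both sets are genuinely maximal and counting the leaf contributions correctly — is the part I expect to be fiddly; this is the main obstacle. Condition (a) for independent sets $I$ disjoint from $N(x)$: if $\alpha(G_0)-i(G_0-N(I))\ge 1$, then since $I$ avoids $N(x)$ we may put $x_1,x_2$ \emph{and} $x$ on opposite sides appropriately — or rather, note $I\cup\{x\}$ still avoids trouble — and produce two maximal independent sets of $G$ differing by $\ge 2$ (the larger using $I_0^*$ and both leaves $x_1,x_2$, plus $I$; the smaller using $I_0$ and $x$, plus $I$), again via the type-$0$/type-$1$ structure of $G-\{x_1,x_2\}$ combined with Lemma~\ref{lem:Hdomination}-style reasoning; the point is that the ``budget'' of $x$ versus $\{x_1,x_2\}$ provides exactly one extra unit, so the $G_0$-gap must be $0$, not just $\le 1$.

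For the backward direction ($\Leftarrow$): assume all the stated conditions. I would show $\mu_\alpha(G)\le 1$ by taking two maximal independent sets $I_1,I_2$ of $G$ and bounding $|I_1|-|I_2|$. Decompose each $I_j$ as $I_j^0 \cup I_j^1 \cup (I_j\cap L) \cup (I_j\cap\{x,x_1,x_2\})$ following the proof of Theorem~\ref{thm:gapk_characterization}. Both sets contain exactly $|V(G_1)|$ vertices among $V(G_1)\cup(\text{their leaves})$, and each contains either $x$ (contributing $1$) or $\{x_1,x_2\}$ (contributing $2$) or, if some neighbor of $x$ lies in $I_j^1$, just $\{x_1,x_2\}$ — so the contribution from $\{x,x_1,x_2\}$ differs by at most $1$ between $I_1$ and $I_2$. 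It remains to bound $|I_1^0|-|I_2^0|$: $I_2^0$ is a maximal independent set of $G_0 - N(I_2^1)$, so $|I_2^0|\ge i(G_0-N(I_2^1))$, and $|I_1^0|\le\alpha(G_0)$; the hypotheses (a)/(b) give $\alpha(G_0)-i(G_0-N(I_2^1))\le 1$, and $\le 0$ when $I_2^1\cap N(x)=\emptyset$. Combining the two bounds carefully — the $G_0$-slack of $\le 1$ cannot co-occur with the $\{x,x_1,x_2\}$-slack of $1$ in the same direction, because the former needs $I_2^1$ to meet $N(x)$ which forces $I_2$ to omit $x$ and hence the $x$-block contributes $2$ to \emph{both} — yields $|I_1|-|I_2|\le 1$. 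Finally, $G$ is not well-covered because $x$ has two leaves (Lemma~\ref{lem:leafcondition} with the type-$2$ vertex, or directly: swap $x$ for its two leaves in any maximal independent set containing $x$). Hence $\mu_\alpha(G)=1$. The delicate point throughout is the interaction between the ``$x$-block'' slack and the ``$G_0$-block'' slack; the conditions (a) versus (b) are engineered precisely so these two sources of slack are mutually exclusive, and making that precise is the crux.
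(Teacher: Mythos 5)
Your backward direction is essentially the paper's argument and is fine, but the forward direction as proposed contains a false intermediate claim. You assert that $\mu_\alpha(G)=1$ forces $H=G-\{x,x_1,x_2\}$ to be well-covered, and then feed $H$ into Corollary~\ref{cor:type1main}(1) to obtain that $G_0$ is well-covered and that $\alpha(G_0)=i(G_0-N(I))$ for \emph{every} independent set $I$ in $H_1=G_1$. If that were true, condition (b) of the theorem could be upgraded to equality, contradicting the statement you are proving. A concrete counterexample: take the tree with vertices $x,x_1,x_2,a,a',u,v,b,b'$ and edges $xx_1$, $xx_2$, $xa$, $aa'$, $au$, $uv$, $vb$, $bb'$. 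Here $G_1=\{a,b\}$, $G_0=\{u,v\}$, and one checks (directly, or via the theorem's conditions) that $G$ is almost well-covered; yet $H\cong P_6$ has maximal independent sets $\{a,b\}$ and $\{a',u,b'\}$ of sizes $2$ and $3$, so $H$ is not well-covered, and indeed $I=\{a,b\}$ gives $\alpha(G_0)-i(G_0-N(I))=1$ with $I\cap N(x)\neq\emptyset$, exactly the slack that condition (b) permits. (A smaller issue: the identities $H_0=G_0$ and $V(H_1)=V(G_1)$ can also fail, since a degree-$2$ neighbor of $x$ together with its leaf becomes a $K_2$ component of $H$.) So your route to ``$G_0$ is well-covered'' and to condition (a) collapses; the slack coming from sets $I$ meeting $N(x)$ is precisely why the detour through a well-covered auxiliary graph cannot work.

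Moreover, the parts you defer as ``fiddly'' and ``the main obstacle'' are exactly the substance of the forward direction, and the one concrete construction you offer there is flawed: a ``large'' set built from $I_0^*$ together with $I$ need not even be independent, since vertices of $G_1$ can have neighbors in $G_0$. The paper's forward argument is short once one records $\alpha(G)=\alpha(G_0)+|V(G_1)|+2$ (which uses $N(x)\cap V(G_0)=\emptyset$, obtained from Lemma~\ref{lem:key2G_2} as you say): for $G_0$ well-covered, extend $I_1\cup\{x',x''\}$ and $I_2\cup\{x\}$ to maximal independent sets of $G$; for (a), extend $S\cup I\cup\{x\}$ (with $S$ a minimum maximal independent set of $G_0-N(I)$) to a maximal independent set of size at most $i(G_0-N(I))+|V(G_1)|+1<\alpha(G)-1$; for (b), note that every maximal independent set containing $S\cup I$ has size exactly $|S|+|V(G_1)|+2$, which must be at least $\alpha(G)-1$. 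No appeal to Lemma~\ref{lem:Hdomination} is needed anywhere in this theorem (it only enters later, in the polynomial-time recognition result). As it stands, your proposal proves the easy direction, states a false structural claim in the hard direction, and leaves the genuinely needed counting constructions unverified.
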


\begin{proof}
Let $G$ be a graph satisfying the hypotheses of the theorem.

$(\Rightarrow)$ Let $G$ be almost well-covered and let $x'$ and $x''$ denote the two leaves adjacent to $x$.
By Lemma~\ref{lem:key2G_2} it follows that $N(x)\cap V(G_0)=\emptyset$. It is also clear that every internal vertex different from $x$ is of type 1 or type 0.
Observe that $\alpha(G)=\alpha(G_0)+|V(G_1)|+2$.
We first claim that $G_0$ is well-covered. Suppose that there exist two maximal independent sets $I_1$ and $I_2$ of $G_0$ with $|I_1|>|I_2|$. By extending $I_1\cup \{x',x''\}$ and $I_2\cup \{x\}$ to maximal independent sets in $G$, we obtain two maximal independent sets in $G$ with sizes that differ by at least two, contradicting the assumption that $G$ is almost well-covered. Hence, $G_0$ is well-covered.

Let $I$ be an independent set in $G_1$. It is clear that $\alpha(G_0)\geq i(G_0-N(I))$. Suppose first that $I\cap N(x)=\emptyset$ and   $\alpha(G_0)\neq i(G_0-N(I))$. Then $ \alpha(G_0)> i(G_0-N(I))$. Let $S$ be a maximal independent set in $G_0-N(I)$ of size $i(G_0-N(I))$. Extending $S\cup I\cup \{x\}$ to a maximal independent set in $G$ results in a maximal independent set of $G$ of size less than $\alpha(G_0)+|V(G_1)|+1=\alpha(G)-1$, contradicting the assumption that $G$ is almost well-covered. Hence, condition $(a)$ holds.

Suppose now that $I\cap N(x)\neq \emptyset$. Let $S$ be a maximal independent set in $G_0-N(I)$ of minimum size, i.e., of size $i(G_0-N(I))$. Then every maximal independent set of $G$ containing $S\cup I$ is of size $|S|+|V(G_1)|+2$. Since $G$ is almost well-covered, it follows that
$$|S|+|V(G_1)|+2\geq\alpha(G)-1=\alpha(G_0)+|V(G_1)|+1,$$ hence $\alpha(G_0)-|S|=\alpha(G_0)-i(G_0-N(I))\leq 1$, that is, condition $(b)$ holds.

\smallskip
$(\Leftarrow)$
Suppose that every internal vertex of $G$ other than $x$ is of type 0 or 1, $N(x)\cap V(G_0)=\emptyset$, $G_0$ is well-covered, and for every independent set $I$ in $G_1$ conditions $(a)$ and $(b)$ hold.
Since $G$ has a vertex of type $2$, Lemma~\ref{lem:leafcondition} implies that $G$ cannot be well-covered.
Therefore, to show that $G$ is almost well-covered, it suffices to show that every maximal independent set in $G$ is of size at least $\alpha(G)-1$.
Let $S$ be a maximal independent set in $G$. Let $I_1=S\cap V(G_1)$ and $I_0=S\cap V(G_0)$.
Observe that
$$|S| = |I_0|+|V(G_1)|+\left\{
  \begin{array}{ll}
    1, & \hbox{if $x\in S$;} \\
    2, & \hbox{otherwise.}
  \end{array}
\right.$$
 It is clear that $I_0$ is a maximal independent set in $G_0-N(I_1)$, since $N(x)\cap V(G_0)=\emptyset$. Suppose first that $x\in S$. Then  $I_1\cap N(x)=\emptyset$. Since $G_0 - N(I_1)$ is an induced subgraph of $G_0$, we have $\alpha(G_0-N(I_1))\le \alpha(G_0)$. By the hypothesis $(a)$, it follows that $\alpha(G_0)=i(G_0 - N(I_1))$; therefore, $G_0 - N(I_1)$ is well-covered, in particular, $|I_0|=\alpha(G_0)$, which implies that $|S|=\alpha(G_0)+|V(G_1)|+1=\alpha(G)-1$.

 Suppose now that $x\not \in S$. If $N(x)\cap I_1=\emptyset$, then arguing similarly as above, we obtain that $|S|=\alpha(G)$. Finally, let $N(x)\cap I_1\neq \emptyset$. Then by hypothesis $(b)$ it follows that $|I_0|\geq \alpha
(G_0)-1$, and hence $|S|\geq \alpha(G)-1$. This shows that $G$ is almost well-covered.
\end{proof}

In the following lemma, we continue with the study of almost well-covered graphs of girth at least 6 with exactly one type 2 vertex in order to exhibit their structure in more detail and derive an efficient recognition algorithm for this case.

\begin{lemma}\label{lem:1type2}
If $G$ is an almost well-covered graph of girth at least $6$ with exactly one type 2 vertex, then every component of $G_0$ is isomorphic to $K_1$, $K_2$, $P_4$, or $C_7$.
\end{lemma}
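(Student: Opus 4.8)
The plan is to combine Theorem~\ref{thm:1type2} with the structure of well-covered graphs of girth at least $6$. Since $G$ has a vertex of type $2$ it is not well-covered (Lemma~\ref{lem:leafcondition}), so Corollary~\ref{cor:components} gives a unique non-well-covered component, which — again by Lemma~\ref{lem:leafcondition} — is the one containing the type $2$ vertex $x$. Thus Theorem~\ref{thm:1type2} applies: $G_0$ is well-covered, $N(x)\cap V(G_0)=\emptyset$, and conditions $(a)$ and $(b)$ hold. Since $G_0$ is well-covered of girth at least $6$, Proposition~\ref{prop:wc_girth6} and Lemma~\ref{connected} show that every component of $G_0$ is $K_1$, $C_7$, or a graph whose pendant edges form a perfect matching; in the last case the component is the corona $H''\circ K_1$ of a connected graph $H''$ of girth at least $6$, which is $K_2$ or $P_4$ when $|V(H'')|\leq 2$. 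So it suffices to derive a contradiction from the existence of a component $H$ of $G_0$ equal to $H''\circ K_1$ with $H''$ connected on $n:=|V(H'')|\geq 3$ vertices.

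Fix such an $H$, pick $b\in V(H'')$ with $\deg_{H''}(b)\geq 2$, and let $b'$ be its pendant leaf in $H$. As $b'$ is a type $0$ vertex of $G$ it is internal, adjacent to no leaf of $G$, and not adjacent to $x$; since its only neighbour in $H$ is $b$, it has a neighbour $g\in V(G_1)$. The girth condition forces $N(g)\cap V(H)$ to meet neither $b$, nor the neighbours of $b$ in $H''$, nor their pendant leaves (for instance $g\sim b$ would give a triangle, $g\sim w$ for $w\in N_{H''}(b)$ a $4$-cycle, and $g\sim w'$ for such $w$ a $5$-cycle), while $b'\in N(g)$. Hence in $H-N(g)$ the vertex $b$ lies in a component $K$ with $\deg_K(b)=\deg_{H''}(b)\geq 2$ and no pendant neighbour, while every neighbour of $b$ in $K$ keeps its own pendant leaf. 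By Proposition~\ref{prop:wc_girth6}, $K$ cannot be $K_1$ ($b$ has degree $\geq 2$), nor $C_7$ ($K$ has a leaf, namely the pendant of a neighbour of $b$, whereas $C_7$ has none), nor a graph whose pendant edges form a perfect matching ($b$, a non-leaf with no leaf-neighbour, could not be saturated), so $K$, and therefore $G_0-N(g)$, is not well-covered: $\mu_\alpha(G_0-N(g))\geq 1$. If $g\notin N(x)$ this contradicts condition $(a)$, which demands $i(G_0-N(g))=\alpha(G_0)$. The same argument applies to every degree-$\geq 2$ vertex of $H''$, so we may assume that the pendant leaf of each such vertex has all its $G_1$-neighbours in $N(x)$, and hence (two such neighbours would create a $4$-cycle through $x$) a \emph{unique} $G_1$-neighbour, lying in $N(x)$.

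To exploit condition $(b)$ instead we need the stronger bound $\mu_\alpha(G_0-N(I))\geq 2$. Here one uses that $b$, having degree $\geq 2$ in the triangle-free graph $H''$, has two non-adjacent neighbours $a,c$: deleting their pendant leaves $a',c'$, the set $\{b\}\cup\{v'\colon v\in V(H'')\setminus\{a,b,c\}\}$ is a maximal independent set of $H-\{a',c'\}$ of size $n-2$, while $\{a,c\}\cup\{v'\colon v\in V(H'')\setminus\{a,c\}\}$ is independent of size $n$, so $\mu_\alpha(H-\{a',c'\})\geq 2$. It remains to realise this deletion as $N(I)$ for an independent $I\subseteq V(G_1)$: take $G_1$-neighbours $g_a$ of $a'$ and $g_c$ of $c'$ and set $I=\{g_a\}$ if $g_a$ is also adjacent to $c'$, and $I=\{g_a,g_c\}$ otherwise. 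The hard part is controlling these neighbours, since the girth bound still permits $g_a$ and $g_c$ to be adjacent (through a $7$-cycle $g_a\,a'\,a\,b\,c\,c'\,g_c$) and permits $N(g_a)\cap V(H)$ to delete vertices of $H$ far from $a$; both difficulties are handled through $x$ together with a case split on the shape of $H''$: if $g_a$ or $g_c$ lies outside $N(x)$ we are back in the situation of the previous paragraph (applied with $a$ or $c$ in place of $b$, using that $b$ acquires a second pendant leaf when $\deg_{H''}(a)=1$), and if $g_a,g_c\in N(x)$ with $g_a\sim g_c$ then $x,g_a,g_c$ form a triangle; organising this — and the parallel argument when $H''$ is a star, where one must instead delete two of the arm-leaves — is where the bulk of the bookkeeping lies. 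An equivalent route throughout is to turn $b$ (or a suitable neighbour) into a vertex with too many pendant leaves in $G-N[I]$ and apply Lemma~\ref{tool} and Lemma~\ref{lem:leafcondition}, as in the proof of Lemma~\ref{lem:key2G_2}.
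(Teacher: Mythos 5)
Your overall strategy is the same as the paper's: reduce to a corona component $H$ via Theorem~\ref{thm:1type2} and Proposition~\ref{prop:wc_girth6}, then use the $G_1$-neighbours of pendant leaves of $H$ to violate condition $(a)$ (forcing those neighbours into $N(x)$) and finally condition $(b)$. Your first paragraph is correct and is a legitimate variant of the paper's first step: instead of directly exhibiting a small maximal independent set, you re-apply Proposition~\ref{prop:wc_girth6} to the component of $H-N(g)$ containing $b$, and the girth checks you list do guarantee that $b$ keeps all its base neighbours and loses only its own pendant, so that component indeed fails the pendant-perfect-matching condition.

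The genuine gap is in the condition-$(b)$ step. You compute $\mu_\alpha(H-\{a',c'\})\ge 2$, but the set you actually have available is $H-N(I)$, and $N(I)\cap V(H)$ may strictly contain $\{a',c'\}$: even with $g_a,g_c\in N(x)$, the girth bound allows $g_a$ to be adjacent to the pendant $v'$ of any base vertex $v$ at distance at least $2$ from $a$ in $H''$ (the resulting cycle has length at least $6$), so your claim that this difficulty is ``handled through $x$'' is unsubstantiated -- adjacency to $x$ only rules out $g_a\sim g_c$, not extra deletions inside $H$. Your proposal then defers exactly this point (plus an unexplained ``parallel argument when $H''$ is a star'') to unspecified ``bookkeeping,'' so the violation of $(b)$ is never actually established. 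The repair is short and is what the paper does: extra deletions are harmless if you bound $i(H-N(I))$ by counting. By the girth condition $b\notin N(I)$, so take \emph{any} maximal independent set $S$ of $H-N(I)$ containing $b$; then $S$ avoids $a,c$ (neighbours of $b$) and $a',c'$ (deleted), and meets every other pendant pair $\{v,v'\}$ in at most one vertex, so $|S|\le \alpha(H)-2$, whence $\alpha(G_0)-i(G_0-N(I))\ge 2$, contradicting $(b)$ since $I\cap N(x)\neq\emptyset$. (The same counting also replaces your paragraph-one detour and handles the degree-one case of $a$ and the star case uniformly, which is why the paper needs no case split on the shape of $H''$.)
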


\begin{proof}
Let $G$ be an almost well-covered graph of girth at least $6$ with exactly one type 2 vertex $x$. Let $x'$ and $x''$ denote the two leaves adjacent to $x$.
By Theorem~\ref{thm:1type2} it follows that $G_0$ is well-covered, and hence by Proposition~\ref{prop:wc_girth6}, it follows that every component of $G_0$ is isomorphic to $K_1$, $C_7$, or a graph with pendant edges forming a perfect matching.

Suppose now that $H$ is a component of $G_0$ that is isomorphic to neither of $K_1$, $K_2$, $P_4$, or $C_7$. Then there exists a perfect matching in $H$ formed by pendant edges. Suppose that $h_1h'_1, \ldots,h_mh'_m$ are pendant edges forming a perfect matching in $H$, where $h'_i$ are leaves in $H$. Observe that $m\geq 3$ and $\alpha(H)=m$. Since $H$ is connected, we may assume without loss of generality that $h_2$ is adjacent to $h_1$ and $h_3$. Since $h'_1,h'_2,h'_3\in V(G_0)$, it follows that there exist $a_1,a_2,a_3\in V(G_1)$ (where $a_1=a_3$ is possible) such that $h'_ia_i\in E(G)$. Suppose that $a_1\not \in N(x)$ and let $I=\{a_1\}$. Since the girth of $G$ is at least $6$, it follows that $a_1$ is not adjacent to $h_2$. It is now clear that every maximal independent set of $H-N(I)$ containing $h_2$ is of size at most $m-1<\alpha(H)$. Therefore, $i(H-N(I))<\alpha(H)$ and consequently $i(G_0-N(I))<\alpha(G_0)$, which violates condition $(a)$ from Theorem~\ref{thm:1type2}. We conclude that $a_1\in N(x)$. Similarly $a_3\in N(x)$. Using the fact that the girth of $G$ is at least 6, it follows that $a_1$ is not adjacent to $h_3$ and $a_3$ is not adjacent to $h_1$. Now let $I=\{a_1,a_3\}$ be an independent set in $G_1$. Observe that $h_2\in H-N(I)$. If $S$ is a maximal independent set in $H-N(I)$ containing $h_2$, then it is clear that $S\cap \{h_1,h'_1,h_3,h'_3\}=\emptyset$. This implies that $i(H-N(I))\leq \alpha(H)-2$, and consequently $i(G_0-N(I))\leq \alpha(G_0)-2$, which violates condition $(b)$ from Theorem~\ref{thm:1type2}.
\end{proof}

Lemma~\ref{lem:Hdomination}, Theorem~\ref{thm:1type2}, and Lemma~\ref{lem:1type2} imply the following.

\begin{theorem}\label{thm:1type2Poly}
Given a graph $G$ of girth at least $6$ with exactly one type 2 vertex, it can be decided in polynomial time whether $G$ is almost well-covered.
\end{theorem}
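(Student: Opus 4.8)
The plan is to extract a polynomial-time test directly from Theorem~\ref{thm:1type2}, using Lemma~\ref{lem:1type2} to make the relevant graph $G_0$ small and Lemma~\ref{lem:Hdomination} to turn the quantifier over independent sets of $G_1$ into a finite one. Let $x$ be the unique type $2$ vertex of $G$. First I would dispose of cheap rejections: if some internal vertex of $G$ has at least three leaves, then $\mu_\alpha(G)\ge 2$ by Corollary~\ref{cor:atmost2leaves} and $G$ is rejected; otherwise every internal vertex other than $x$ has type $0$ or $1$, so Theorem~\ref{thm:1type2} applies. I would then test the structural conditions $N(x)\cap V(G_0)=\emptyset$ and ``$G_0$ is well-covered'' (the latter being polynomial-time testable for girth at least $6$ by applying Proposition~\ref{prop:wc_girth6} componentwise, via Lemma~\ref{connected}), rejecting if either fails. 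What then remains is to decide conditions $(a)$ and $(b)$ of Theorem~\ref{thm:1type2}, which quantify over \emph{all} independent sets $I$ of $G_1$.

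To make this finite, I would next compute the components of $G_0$ and reject $G$ if some component is not isomorphic to $K_1$, $K_2$, $P_4$, or $C_7$; this is justified by Lemma~\ref{lem:1type2}, whose hypotheses are met at this stage. After this step every component of $G_0$ has at most $n_0:=7$ vertices (and the case where $G_0$ is null is trivial, since then $(a)$ and $(b)$ hold vacuously and $G$ is accepted). The key reduction is: conditions $(a)$ and $(b)$ hold for all independent sets of $G_1$ if and only if they hold for all independent sets of $G_1$ of size at most $2n_0=14$. To prove the nontrivial direction I would take a hypothetical violating set $I$. If $I$ violates $(a)$ (so $I\cap N(x)=\emptyset$ and $\alpha(G_0)-i(G_0-N(I))\ge 1$), I apply Lemma~\ref{lem:Hdomination} with $H=G_0$, $k=1$, and this $I$ (legitimate since $I\subseteq V(G_1)\subseteq V(G)\setminus V(G_0)$ and $n_0$ bounds the component sizes of $G_0$), obtaining $I_1\subseteq I$ with $|I_1|\le n_0$ and $\alpha(G_0)-i(G_0-N(I_1))\ge 1$; since $I_1\subseteq I$ we still have $I_1\cap N(x)=\emptyset$, so $I_1$ is a small witness against $(a)$. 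If $I$ violates $(b)$ (so $I\cap N(x)\ne\emptyset$ and $\alpha(G_0)-i(G_0-N(I))\ge 2$), I apply Lemma~\ref{lem:Hdomination} with $k=2$, obtaining $I_1\subseteq I$ with $|I_1|\le 2n_0$ and $\alpha(G_0)-i(G_0-N(I_1))\ge 2$; this violates $(b)$ if $I_1\cap N(x)\ne\emptyset$, and otherwise (using $\alpha(G_0)-i(G_0-N(I_1))\ge 2\ge 1$) it violates $(a)$. Finally, checking $(a)$ and $(b)$ over the $O(|V(G_1)|^{14})$ subsets of $V(G_1)$ of size at most $14$ is polynomial: for each such $I$ one tests independence and whether $I\cap N(x)=\emptyset$, and then computes $\alpha(G_0)$ and $i(G_0-N(I))$ as sums over components, each of which has at most $7$ vertices and so contributes a value computable in constant time.

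The part I expect to require the most care is this last reduction, specifically the interaction between the size-shrinking of Lemma~\ref{lem:Hdomination} and the membership condition ``$I\cap N(x)\ne\emptyset$'' occurring in $(b)$: the lemma gives no control over whether the shrunk set $I_1$ still meets $N(x)$. The way around it, as above, is to verify $(a)$ and $(b)$ \emph{simultaneously}, so that a shrunk witness which happens to miss $N(x)$ is harmless — it is then automatically a witness against $(a)$, which is also being checked. Everything else is bookkeeping: classifying vertices by type, recognizing the four admissible component types of $G_0$, and the routine constant-size computations of $\alpha$ and $i$ on small components.
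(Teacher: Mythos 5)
Your proposal is correct and follows essentially the same route as the paper: reduce to Theorem~\ref{thm:1type2} via Lemma~\ref{lem:1type2} (so every component of $G_0$ has at most $7$ vertices), then use Lemma~\ref{lem:Hdomination} with $k=1$ and $k=2$ to restrict conditions $(a)$ and $(b)$ to independent sets of $G_1$ of size at most $7$ and $14$, respectively, and enumerate. The only cosmetic difference is how the shrunk witness's membership in $N(x)$ is handled: the paper tests gap at most $1$ for \emph{all} sets of size at most $14$ (irrespective of $N(x)$) plus gap $0$ for sets of size at most $7$ avoiding $N(x)$, whereas you check $(a)$ and $(b)$ verbatim on small sets and reroute a shrunk $(b)$-witness missing $N(x)$ to a violation of $(a)$ --- the same trick in a different wrapper.
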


\begin{proof}
Let $G$ be a graph of girth at least $6$ with exactly one type 2 vertex.
If $G_0$ is the null graph then conditions $(a)$ and $(b)$ from Theorem~\ref{thm:1type2} are trivially satisfied, hence $G$ is almost well-covered.
If $N(x)\cap V(G_0)\neq \emptyset$ then by Theorem~\ref{thm:1type2} it follows that $G$ is not almost well-covered, hence we may assume that $N(x)\cap V(G_0)= \emptyset$.
If there is a component in $G_0$ not isomorphic to $K_1$, $K_2$, $P_4$, or $C_7$, then by Lemma~\ref{lem:1type2} it follows that $G$ is not almost well-covered. Suppose now that every component of $G_0$ is one of $K_1$, $K_2$, $P_4$, and $C_7$. Let $n_0$ be the maximum number of vertices in a component of $G_0$. Then $n_0\leq 7$.

For every independent set $I_1$ in $G_1$ of size at most $14$ we compute the value of $\alpha(G_0)-i(G_0-N(I_1))$. If for some such $I_1$ we have $\alpha(G_0)-i(G_0-N(I_1))\geq 2$, then by Theorem~\ref{thm:1type2} it follows that $G$ is not almost well-covered.
Next  we compute the value of $\alpha(G_0)-i(G_0-N(I_2))$ for every independent set $I_2$ in $G_1-N(x)$ with $|I_2|\leq 7$. If for some such $I_2$ we have $\alpha(G_0)-i(G_0-N(I_2))\geq 1$, then $G$ is not almost well-covered by Theorem~\ref{thm:1type2}.

Suppose now that for every independent set $I_1$ in $G_1$ of size at most 14 we have $\alpha(G_0)-i(G_0-N(I_1))\leq 1$ and for every independent set $I_2$ in $G_1- N(x)$ of size at most 7 we have $\alpha(G_0)=i(G_0-N(I_2))$. We claim that $G$ is almost well-covered. Suppose to the contrary that $G$ is not almost well-covered. Then it follows that at least one of the conditions $(a)$ or $(b)$ from Theorem~\ref{thm:1type2} is not satisfied. Suppose first that there exists an independent set $I$ in $G_1$ that doesn't satisfy condition $(b)$ of Theorem~\ref{thm:1type2}, that is $I\cap N(X)\neq \emptyset$ and $\alpha(G_0)-i(G_0-N(I))\geq 2$. Applying Lemma~\ref{lem:Hdomination} with $H=G_0$ and $k=2$, we can conclude that there exists an independent set $I_1\subseteq I$ such that $|I_1|\leq 14$ and $\alpha(G_0)-i(G_0-N(I_1))\geq 2$, a contradiction. Suppose now that there exists an independent set $I$ in $G_1$ which doesn't satisfy condition $(a)$ of Theorem~\ref{thm:1type2}, that is, $I\cap N(x)=\emptyset$ and $\alpha(G_0)-i(G_0-N(I))\geq 1$. Applying Lemma~\ref{lem:Hdomination} with $H=G_0$ and $k=1$, it follows that there exists an independent set $I_2\subseteq I$ of size at most 7 such that $\alpha(G_0)-i(G_0-N(I_2))\geq 1$. Observe that since $I_2\subseteq I$ and $I\cap N(x)=\emptyset$ it follows that $I_2\cap N(x)=\emptyset$. Hence $I_2$ is an independent set in $G_1-N(x)$ of size at most $7$ such that $\alpha(G_0)-i(G_0-N(I_2))\geq 1$, contradicting the assumption at the beginning of the paragraph. This proves our claim that $G$ is well-covered.

Since we only need to check independent sets of $G_1$ of bounded size, the algorithm can be implemented to run in polynomial time.
\end{proof}

\subsection{Almost well-covered \textbf{$\{C_3,C_4,C_5,C_7\}$}-free graphs}\label{subsec:notype2}

Based on results in the previous subsections, we now develop a polynomial-time recognition algorithm for almost well-covered
$\{C_3,C_4,C_5,C_7\}$-free graphs. The following lemma will be crucial.

\begin{lemma}\label{lem:deg2}
Let $G$ be an almost well-covered $\{C_3,C_4,C_5,C_7\}$-free graph. Then for all $x \in V(G_0)$, we have $d_{G_0}(x)\leq 2$.
\end{lemma}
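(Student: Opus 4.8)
The plan is to argue by contradiction: suppose $G$ is an almost well-covered $\{C_3,C_4,C_5,C_7\}$-free graph and $x\in V(G_0)$ has $d_{G_0}(x)\ge 3$, with three distinct neighbors $x_1,x_2,x_3$ inside $G_0$. The first observation is that since $x\in V(G_0)$, the vertex $x$ has no leaf neighbor in $G$, and since $x_1,x_2,x_3\in V(G_0)$ they have no leaf neighbors either; moreover they may have further neighbors in $G_0$ or in $G_1$ (there is at most one type $2$ vertex, and we will want to avoid it). The strategy is to build a small independent set $I$ that we can remove (using Lemma~\ref{tool}, so that $\mu_\alpha(G-N[I])\le \mu_\alpha(G)=1$) so that in $G'=G-N[I]$ the vertex $x$ becomes an internal vertex adjacent to at least three leaves; then Lemma~\ref{lem:leafcondition} applied to $G'$ gives $\mu_\alpha(G')\ge 2$, the desired contradiction. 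To turn $x_1,x_2,x_3$ into leaves of $x$ in $G'$, I want to pick, for each neighbor $y\ne x$ of each $x_j$, a vertex that dominates $y$ but lies at distance $\ge 2$ from $x$ and whose chosen representatives across $j$ form an independent set at distance $\ge 2$ from $x$ — essentially taking $I$ to be a set of vertices ``two steps past'' the $x_j$'s. The girth/forbidden-cycle hypotheses are exactly what make such an $I$ independent and disjoint from $N[x]$.

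The key steps, in order, would be: (1) Note that it suffices to handle the case where each $x_j$ still has a neighbor other than $x$ in $G$ — if some $x_j$ already has $x$ as its only neighbor it is a leaf of $x$ in $G$ itself, but then $x_j$ would be a leaf and not in $V(G_0)$, contradiction; so each $x_j$ has degree $\ge 2$ in $G$. (2) For each $x_j$ and each neighbor $w$ of $x_j$ with $w\ne x$, the vertex $w$ is at distance $2$ from $x$; pick a neighbor of $w$ other than $x_j$ (which exists because $w$ is internal — if $w$ were a leaf of $x_j$ that is fine too, we would instead put $w$ itself aside, but let me take the generic ``distance $3$'' route): choose $z_w\in N(w)\setminus\{x_j\}$. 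Let $I$ be the set of all such $z_w$. (3) Using the absence of $C_3,C_4,C_5$ (girth $\ge 6$ among these), verify $I$ is an independent set: any two vertices of $I$ that were adjacent, or any vertex of $I$ adjacent to or equal to a vertex of $N[x]$, would create a cycle of length $\le 6$; here one must be a little careful because two different $w$'s could share a common further neighbor, which is where $C_4$-freeness and $C_5$-freeness get used, and a closed walk of length $6$ through $x$ is excluded because $G$ has girth $\ge 6$ and we can arrange the $z_w$ to avoid coincidences. (4) Check that in $G'=G-N[I]$ each $x_j$ has had all its neighbors other than $x$ removed (each such neighbor $w$ lies in $N(z_w)\subseteq N(I)$), so $x_j$ becomes a leaf of $x$ in $G'$, whence $x$ is internal in $G'$ with $\ge 3$ leaves. (5) Conclude via Lemma~\ref{lem:leafcondition} that $\mu_\alpha(G')\ge 2$, contradicting Lemma~\ref{tool} and $\mu_\alpha(G)=1$.

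The main obstacle is step (3): making the choice of the ``distance-$3$ representatives'' $z_w$ so that the resulting set $I$ is genuinely independent and genuinely disjoint from $N[x]$, uniformly over the (possibly many) neighbors $w$ of $x_1,x_2,x_3$. The delicate cases are (i) a neighbor $w$ of $x_j$ that is a leaf of $x_j$ (then $w$ has no neighbor besides $x_j$, so it cannot be ``covered from further out'' — but then $w$ is itself a leaf and can be added to $I$ only if it is independent from the rest, and removing $N[w]=\{w,x_j\}$ would delete $x_j$, which is the opposite of what we want, so instead one should simply note $x_j$ has another neighbor, or treat leaves of $x_j$ separately by observing they do not obstruct $x_j$ from becoming a leaf of $x$ after we delete everything else); (ii) two distinct neighbors $w,w'$ (of the same $x_j$ or of different $x_j$'s) for which every available further neighbor coincides — ruled out by $C_4$- and $C_5$-freeness and the fact that $x_1,x_2,x_3$ are distinct nonadjacent (girth $\ge 6$) vertices; and (iii) ensuring no $z_w$ equals $x$ or a neighbor of $x$, which would force a short cycle through $x$. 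I expect the clean way to organize this is: let $Y=\bigl(N(x_1)\cup N(x_2)\cup N(x_3)\bigr)\setminus\{x\}$, observe $Y$ is an independent set at distance $2$ from $x$ (girth $\ge 6$), split $Y$ into the leaves (which already ``are'' leaves, handled trivially) and the internal vertices, choose one out-neighbor $z_w\notin N[x_1]\cup N[x_2]\cup N[x_3]$ for each internal $w\in Y$, and push the independence of $\{z_w\}$ through the forbidden-cycle list — this bookkeeping, rather than any deep idea, is where the real work lies.
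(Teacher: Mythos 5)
Your overall strategy is exactly the paper's: delete $N[I]$ for a suitable independent set $I$ so that $x$ acquires at least three leaves in $G'=G-N[I]$, then combine Lemma~\ref{lem:leafcondition} (or Corollary~\ref{cor:atmost2leaves}) with Lemma~\ref{tool}. However, the step you yourself flag as ``the real work'' --- step (3), independence of $I$ --- is where your argument has a genuine gap, and it is not mere bookkeeping. Each representative $z_w$ lies at distance exactly $3$ from $x$ (your $C_3,C_4,C_5$ arguments do show this), so if two of them were adjacent you would get a closed walk of length $3+1+3=7$ through $x$, not ``a cycle of length $\le 6$'' as you claim; such an odd closed walk forces an odd cycle of length $3$, $5$, or $7$, and the only way to exclude it is the $C_7$-freeness hypothesis, which your proposal never invokes. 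Nor can you ``arrange the $z_w$ to avoid coincidences'': the choice of $z_w$ may be forced (e.g.\ when $w$ has exactly one neighbor besides $x_j$), so without using $C_7$-freeness the independence of $I$ simply does not follow. This is precisely the point of the lemma where forbidding $C_7$ (and not just girth $\ge 6$) is indispensable.

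The paper avoids all of your case analysis by a cleaner choice: take $I$ to be the set of \emph{all} vertices at distance exactly $3$ from $x$. Independence of $I$ is then immediate from $\{C_3,C_5,C_7\}$-freeness (an edge inside $I$ would close an odd walk of length at most $7$), and no representatives need to be selected. For the second half, since each $x_i$ is of type $0$, every neighbor $w\neq x$ of $x_i$ is a non-leaf at distance $2$ from $x$, and $\{C_3,C_4,C_5\}$-freeness forces every neighbor of $w$ other than $x_i$ to be at distance $3$, i.e.\ in $I$; hence $w\in N(I)$ and each $x_i$ becomes a leaf of $x$ in $G'$. (Your hedging in case (i) about leaf neighbors $w$ of $x_j$ is unnecessary --- as you note in your opening paragraph, $x_j\in V(G_0)$ has no leaf neighbors, so that case is vacuous; but beware that your suggested fallback ``leaves of $x_j$ do not obstruct $x_j$ from becoming a leaf'' would be false if the case could occur, since such a $w$ would survive in $G'$.) To repair your version you would need to add the distance-$3$ observation for the $z_w$'s and the $C_7$-freeness argument for their pairwise non-adjacency --- at which point you have essentially reconstructed the paper's proof with extra, avoidable choices.
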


\begin{proof}
Assume for a contradiction that there is a vertex $x\in V(G_0)$ such that $d_{G_0}(x)= k\geq 3$. Let $x_1, \ldots ,x_k$ be the neighbors of $x$ in $G_0$.
Let $I$ be the set of vertices of $G$ at distance $3$ from $x$. Since there are no cycles of length $3$, $5$, or $7$ in $G$, it follows that $I$ is an independent set. Consider now the graph $G'=G-N[I]$.
Since each of $x_i$ is of type 0 and $G$ is $\{C_3,C_4,C_5\}$-free, it follows that $N[I]$ contains all neighbors of the $x_i$s except $x$,  and consequently each of $x_i, i=1, \ldots ,k$ is a leaf in $G'$ (note also that $x_i$s are not adjacent to each other since otherwise we would have a triangle). Then $x$ has at least $k\geq 3$ leaves in $G'$, hence by Corollary \ref{cor:atmost2leaves}, $G'$ has independence gap at least 2, which contradicts Lemma~\ref{tool}.
\end{proof}

\begin{corollary}\label{cor:type1G_0}
Let $G$ be an almost well-covered $\{C_3,C_4,C_5,C_7\}$-free graph and with no vertex of type 2. Then every component of $G_0$ is isomorphic to one of $P_1$, $P_2$, $P_3$, $P_4$, $P_5$,  $P_6$, $P_7$, $P_8$, $P_{10}$, $C_6$, $C_8$, $C_9$, $C_{10}$, $C_{11}$, $C_{13}$.
\end{corollary}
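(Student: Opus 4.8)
The plan is to push everything onto the structure of $G_0$. Since $G$ is $\{C_3,C_4,C_5,C_7\}$-free it has girth at least~$6$ and is $C_7$-free, and, being almost well-covered with no vertex of type~$2$, Corollary~\ref{cor:atmost2leaves} shows that every internal vertex of $G$ is of type~$0$ or~$1$. By Lemma~\ref{lem:deg2}, every vertex of $G_0$ has degree at most~$2$ in $G_0$, so each component of $G_0$ is a path or a cycle. Moreover, $G_0$ is the subgraph of $G$ induced by its type~$0$ vertices, hence it inherits $\{C_3,C_4,C_5,C_7\}$-freeness, so no cyclic component of $G_0$ has length $3$, $4$, $5$, or~$7$.

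The next step is to bound the independence gap of each component of $G_0$. Let $U$ be the set of vertices lying in complete components of $G$ and let $L$ be the set of leaves of $G-U$; then $L$ is an independent set and, exactly as observed in the proof of Theorem~\ref{thm:gapk_characterization}, $G_0 = G-N[L]$. Lemma~\ref{tool} therefore gives $\mu_\alpha(G_0)\le \mu_\alpha(G)=1$, and since, by Lemma~\ref{connected}, $\mu_\alpha(G_0)$ is the sum of the independence gaps of the (nonnegatively weighted) components of $G_0$, every component $H$ of $G_0$ must satisfy $\mu_\alpha(H)\le 1$.

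It remains only to decide which paths and which cycles have independence gap at most~$1$. Using $\alpha(P_n)=\lceil n/2\rceil$, $\alpha(C_n)=\lfloor n/2\rfloor$ together with the (standard) fact that the minimum size of a maximal independent set of $P_n$ and of $C_n$ equals $\lceil n/3\rceil$, one writes $n=6q+r$ with $0\le r\le 5$ and obtains $\mu_\alpha(P_n)=q+\lceil r/2\rceil-\lceil r/3\rceil$ and $\mu_\alpha(C_n)=q+\lfloor r/2\rfloor-\lceil r/3\rceil$. In particular $\mu_\alpha(P_n)\ge q$ and $\mu_\alpha(C_n)\ge q-1$, so only finitely many small cases need be inspected; this yields $\mu_\alpha(P_n)\le 1$ exactly for $n\in\{1,\dots,8,10\}$ and $\mu_\alpha(C_n)\le 1$ exactly for $n\in\{3,4,5,6,7,8,9,10,11,13\}$. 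Discarding the four forbidden cycle lengths $3,4,5,7$ leaves precisely $P_1,\dots,P_8,P_{10},C_6,C_8,C_9,C_{10},C_{11},C_{13}$, as claimed.

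The argument has no genuine obstacle; the only care needed is in the final bookkeeping — knowing (or re-deriving in one line, by selecting every third vertex) that the independent domination number of a path or cycle on $n$ vertices is $\lceil n/3\rceil$, and checking the two borderline pairs $P_9$ (gap $2$, excluded) versus $P_{10}$ (gap $1$, included) and $C_{12}$ (gap $2$, excluded) versus $C_{13}$ (gap $1$, included). All the structural input is already supplied by Lemma~\ref{lem:deg2}, Lemma~\ref{tool}, and Lemma~\ref{connected}.
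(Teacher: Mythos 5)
Your proof is correct and follows essentially the same route as the paper: Lemma~\ref{lem:deg2} forces every component of $G_0$ to be a path or cycle, the independence gap of $G_0$ is at most $1$, and the finite check of which paths and cycles have gap at most $1$ (together with the forbidden cycle lengths) gives exactly the stated list. The only cosmetic difference is that you obtain $\mu_\alpha(G_0)\le 1$ directly from $G_0=G-N[L]$ and Lemma~\ref{tool}, whereas the paper cites Corollary~\ref{cor:type1main}; this is the same underlying argument, since that is precisely how the bound is established inside the proof of Theorem~\ref{thm:gapk_characterization}.
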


\begin{proof}
Lemma~\ref{lem:deg2} implies that every component of $G_0$ is a cycle or a path. By Corollary~\ref{cor:type1main} it follows that $\mu_\alpha(G_0)\leq 1$. Hence, every component of $G_0$ is a cycle or a path which is well-covered or almost well-covered. It can easily be seen that there is no allowed well-covered cycle, and the only well-covered paths are $P_1,P_2$, and $P_4$. The only almost well-covered cycles are $C_6$, $C_8$, $C_9$, $C_{10}$, $C_{11}$, $C_{13}$ and the only almost well-covered paths are $P_3$, $P_5$,  $P_6$, $P_7$, $P_8$, $P_{10}$. The claimed result follows.
\end{proof}

\begin{theorem}\label{thm:polyrecognition}
Given a $\{C_3,C_4,C_5,C_7\}$-free graph, it can be decided in polynomial time whether $G$ is almost well-covered.
\end{theorem}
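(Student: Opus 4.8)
The plan is to reduce the general case to the three situations already analyzed in the preceding subsections, using Corollary~\ref{cor:components} to restrict attention to connected graphs, and Lemma~\ref{lem:type 2 clique} together with Corollary~\ref{cor:atmost2leaves} to reduce to the cases of $0$, $1$, or $2$ vertices of type~$2$. For a connected $\{C_3,C_4,C_5,C_7\}$-free graph $G$, we first test in linear time whether it is complete or isomorphic to $K_1$ (trivially well-covered, hence not almost well-covered), and otherwise compute the partition of $V(G)$ into $U$, the leaves of $G-U$, and the internal vertices, together with each internal vertex's type. If some internal vertex has type $\ge 3$, then by Corollary~\ref{cor:atmost2leaves} the graph is not almost well-covered. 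If there are exactly two type~$2$ vertices, we invoke Corollary~\ref{cor:polyrecognition2type2}; if exactly one, we invoke Theorem~\ref{thm:1type2Poly}. Both run in polynomial time, so it remains to handle the case of no type~$2$ vertex, i.e.\ $G$ has internal vertices of types $0$ and $1$ only.

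In that remaining case, we apply Corollary~\ref{cor:type1G_0}: every component of $G_0$ is one of the finitely many paths and cycles listed there, so the size of every component of $G_0$ is bounded by the constant $13$. If $G_0$ is null, then $G$ is well-covered and hence not almost well-covered (noting that $G$ has at least one internal vertex, since $G$ is connected and not complete and not $K_1$, so it cannot be that $G$ consists only of a $K_2$-leaf structure with empty $G_0$ unless it is well-covered by Proposition~\ref{prop:wc_girth6})---actually more simply, $\mu_\alpha(G)=0$ by the observation preceding Theorem~\ref{thm:gapk_characterization}. Otherwise $G_0$ is non-null with components of bounded size, and Corollary~\ref{cor:G_0boundedPoly} (with $k=1$) tells us that testing $\mu_\alpha(G)\le 1$ can be done in polynomial time; combined with the fact that the presence of any internal vertex of type~$1$ is irrelevant to well-coveredness only via Corollary~\ref{cor:type1main}, we separately check whether $G$ is well-covered (again via Corollary~\ref{cor:type1main}(1), or directly via Proposition~\ref{prop:wc_girth6} applied componentwise since $G$ itself has girth $\ge 6$) to distinguish $\mu_\alpha(G)=0$ from $\mu_\alpha(G)=1$. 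Thus $G$ is almost well-covered if and only if $\mu_\alpha(G)\le 1$ and $G$ is not well-covered, and both conditions are decidable in polynomial time.

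The one subtlety to argue carefully is that Corollary~\ref{cor:type1G_0} genuinely applies in the no-type-$2$ case: it requires $G$ to be almost well-covered as a hypothesis, whereas in the algorithm we do not yet know this. The correct reading is that the algorithm does not need Corollary~\ref{cor:type1G_0} as a correctness certificate but only as a \emph{structural shortcut}: we can instead just check directly that every component of $G_0$ is a path or cycle (equivalently $d_{G_0}(v)\le 2$ for all $v$, which by Lemma~\ref{lem:deg2} is necessary for almost well-coveredness, so if it fails we output ``not almost well-covered'') and then, knowing $G_0$ has maximum degree $\le 2$ and using that paths $P_n$ and cycles $C_n$ have $i$ and $\alpha$ computable in $O(n)$ time, we are free to \emph{not} bound the component sizes at all: Theorem~\ref{thm:gapk_characterization} with $k=1$ still requires checking only independent sets $I$ in $G_1$ of size at most $2n_0$ where $n_0$ could be as large as $|V(G)|$. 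To keep the running time polynomial we therefore do invoke the component-size bound: once $d_{G_0}\le 2$ is confirmed, Corollary~\ref{cor:type1main}(2) already bounds the relevant independent sets by $2n_0$, and Corollary~\ref{cor:type1G_0}'s list shows $n_0\le 13$ \emph{whenever the answer is yes}, so we may cap our search at $|I|\le 26$: if no violating set of that size is found the graph is almost well-covered (by Corollary~\ref{cor:type1main}(2), since any larger violating set would, by Lemma~\ref{lem:Hdomination}, yield one of size $\le 2n_0\le 26$ once we know $n_0\le 13$---and if $n_0>13$ the graph is not almost well-covered by Corollary~\ref{cor:type1G_0}), and otherwise it is not.

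\textbf{Main obstacle.} The delicate point, and the one I would write out in full, is the interplay in the last case between \emph{using} the structural lemmas (Lemma~\ref{lem:deg2}, Corollary~\ref{cor:type1G_0}) as necessary conditions that can be checked up front to prune, versus relying on the general polynomial-time criterion of Corollary~\ref{cor:G_0boundedPoly}/Corollary~\ref{cor:type1main}(2). The cleanest exposition is: first check $d_{G_0}(v)\le 2$ for all $v\in V(G_0)$ (necessary by Lemma~\ref{lem:deg2}); if it holds, then every component of $G_0$ is a path or cycle, so $\alpha$, $i$, and well-coveredness of $G_0$ are computable componentwise in linear time, and additionally every almost-well-covered or well-covered path/cycle has at most $13$ vertices (Corollary~\ref{cor:type1G_0} and the $P_n,C_n$ analysis), so if some component of $G_0$ has more than $13$ vertices we again answer ``no''; in the surviving case $n_0\le 13$ and Corollary~\ref{cor:type1main}(2) together with a separate well-coveredness check via Corollary~\ref{cor:type1main}(1) decides the question by examining $O(|V(G)|^{26})$ independent sets. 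Everything else in the proof is bookkeeping, and I would present it as a short case analysis mirroring the three subsections.
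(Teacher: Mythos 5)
Your proposal is correct and follows essentially the same route as the paper's proof: reduce to the connected, not-well-covered case, dispose of type $\ge 3$ vertices and of three or more type~$2$ vertices via Corollary~\ref{cor:atmost2leaves} and Lemma~\ref{lem:type 2 clique}, invoke Corollary~\ref{cor:polyrecognition2type2} and Theorem~\ref{thm:1type2Poly} for two and one type~$2$ vertices, and in the no-type-$2$ case use Corollary~\ref{cor:type1G_0} contrapositively to bound the components of $G_0$ by $13$ vertices before applying Corollary~\ref{cor:G_0boundedPoly} (equivalently Corollary~\ref{cor:type1main}(2)) with $k=1$. The ``subtlety'' you flag---that Corollary~\ref{cor:type1G_0} is used only as a necessary condition to prune large components, not as a correctness certificate---is exactly how the paper uses it, so your extra degree-$\le 2$ check is harmless but redundant.
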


\begin{proof}
Let $G$ be a $\{C_3,C_4,C_5,C_7\}$-free graph.
If $G$ is disconnected then by Corollary~\ref{cor:components} $G$ is almost well-covered if and only if all components of $G$ are well-covered, except from one which is almost well-covered. By Proposition~\ref{prop:wc_girth6}, it follows that a component of $G$ is well-covered if and only if it is isomorphic to $K_1$ or its pendant edges form a perfect matching.
It is now clear that verifying if a given component of $G$ is well-covered can be done in polynomial time.
Therefore, we may assume that $G$ is connected and not well-covered.
If there is an internal vertex of $G$ that is not of type 0, 1, or 2, then $G$ is not almost well-covered by Corollary~\ref{cor:atmost2leaves}. If $V(G_2)$ is not a clique, then by Lemma~\ref{lem:type 2 clique}, $G$ is not almost well-covered.
From now on assume that $V(G_2)$ is a clique. Since $G$ is $C_3$-free, there are at most two vertices of type 2. If there are exactly two type 2 vertices, the result follows by Corollary~\ref{cor:polyrecognition2type2}. If there is exactly one type 2 vertex, the result follows by Theorem~\ref{thm:1type2Poly}. Suppose now that no vertex of $G$ is of type $2$.
Since $G$ is not well-covered, we have that $V(G_0)\neq\emptyset$ by Proposition~\ref{prop:wc_girth6}.
If $G_0$ has a component having at least $14$ vertices, then, using Corollary~\ref{cor:type1G_0}, we infer that $G$ is not almost well-covered.
Suppose now that the maximum number of vertices in a component of $G_0$ is at most 13.
Since $G$ is not well-covered, the condition that $G$ is almost well-covered is equivalent to the condition $\mu_{\alpha}(G)\leq 1$.
By Corollary~\ref{cor:G_0boundedPoly}, since the order of every component of $G_0$ is bounded by a constant, this latter condition can be checked in polynomial time. This completes the proof.
\end{proof}

Our work leaves open characterizations and/or polynomial-time recognition algorithms for almost well-covered graphs of girth at least $k$ for $k\in \{4,5,6,7\}$. In particular, since well-covered graphs of girth at least $6$ are efficiently characterized (cf.~Proposition~\ref{prop:wc_girth6}) and by Corollary~\ref{cor:polyrecognition2type2} and Theorem~\ref{thm:1type2Poly} there exists a polynomial-time recognition algorithm for almost well-covered graphs of girth at least $6$ with at least one type $2$ vertex, we find the following problem particularly interesting.

\begin{openProblem}
Can almost well-covered graphs of girth at least $6$ be recognized in polynomial time?
\end{openProblem}

\section*{Acknowledgments}
We would like to thank the anonymous referees for careful reading of the manuscript and several helpful comments.

\bibliographystyle{abbrvnat}
\bibliography{biblio}
\label{sec:biblio}

\end{document}